\documentclass[lettersize,journal]{IEEEtran}
\usepackage{cite}
\usepackage{amsmath,amssymb,amsfonts, bm, algorithm, subcaption, adjustbox}
\usepackage{amsthm}
\usepackage{algorithmic}
\usepackage{graphicx}
\usepackage{verbatim}
\usepackage{textcomp}
\usepackage{xcolor}
\usepackage{url}
\usepackage{stfloats}
\usepackage{pifont}
\usepackage{subcaption}
\usepackage{graphicx}
\usepackage{lipsum} 
\def\hlb{\color{blue}}
\def\hlb{\color{black}}
\DeclareMathOperator*{\argmax}{arg\,max}

\newtheorem{theorem}{Theorem}

\def\BibTeX{{\rm B\kern-.05em{\sc i\kern-.025em b}\kern-.08em
		T\kern-.1667em\lower.7ex\hbox{E}\kern-.125emX}}
\begin{document}

	\title{{\hlb When to Use Wireless Challenge-Response Physical Layer Authentication: Design of a Measurable Guideline for OFDM}}
	
	\author{Haiyun Liu,~\IEEEmembership{Student Member,~IEEE,} Shangqing Zhao,~\IEEEmembership{Member,~IEEE,}
		Yao Liu,~\IEEEmembership{Senior Member,~IEEE,}
		and Zhuo Lu,~\IEEEmembership{Senior Member,~IEEE}
		\thanks{Haiyun Liu and Yao Liu are with the Bellini College of Artificial Intelligence, Cybersecurity and Computing,
			University of South Florida, Tampa, FL 33620 USA (e-mail: haiyunliu@usf.edu; yliu21@usf.edu).}
		\thanks{Shangqing Zhao is with the School of Computer Science, University of Oklahoma, OK 73019 USA (e-mail: shangqing@ou.edu).}
		\thanks{Zhuo Lu is with the Department of Electrical Engineering,
			University of South Florida, Tampa, FL 33620 USA (e-mail: zhuolu@usf.edu).}%
		
	}
	
	\maketitle
	
	\begin{abstract}
		\protect {\hlb The security of wireless challenge-response Physical Layer Authentication (PLA) based on  Orthogonal Frequency Division Multiplexing (OFDM) relies on a sufficiently random fading channel condition, which is commonly assumed in existing studies. However, in practical scenarios, such a condition is not always guaranteed and the responses of OFDM subchannels may exhibit correlation.} Consequently, ensuring the security of such PLA systems remains an unsolved problem. In this paper, we propose a novel adversary model, called Maximum Differential Likelihood Generator (MDLG), which exploits the weak correlation property in practical wireless channel to launch effective attacks against PLA. Based on this model, we create a measurable guideline using randomness testing to decide when we can in fact use PLA in a practical wireless channel condition. Extensive real-world experiments validate the effectiveness of the MDLG attack and demonstrate how the proposed guideline can help protect the security of PLA.
	\end{abstract}
	
	\begin{IEEEkeywords}
		Physical layer authentication (PLA); Adversary modeling; Randomness testing; Security guideline.
	\end{IEEEkeywords}

	\section{Introduction}
	Physical Layer Authentication (PLA) is a widely adopted technique for verifying transmitter identities in wireless systems \cite{bloch2011physical}. Compared with conventional cryptographic approaches, PLA verifies device legitimacy by leveraging inherent physical-layer features or wireless channel characteristics \cite{zeng2010non}, offering a lightweight alternative that is better suited for resource-constrained wireless systems such as Internet of Things (IoT) and Radio Frequency Identification (RFID) systems \cite{gao2020physical, das2009two, wang2018towards}. In a commonly used PLA scheme \cite{shan2013phy, zeng2010non, du2014physical, choi2018coding, yang2013transmit, hassanieh2015securing, fang2018learning, wu2014physical}, two legitimate users, Alice and Bob, securely verify each other’s identity by combining a shared secret key with the wireless channel response between them. An adversary, Eve, would face a significant challenge in attempting to recover the secret key, as she cannot access the random channel between Alice and Bob.
	
	The security of wireless PLA relies on sufficiently random fading conditions. As a result, many existing studies \cite{xiao2007finger,xiao2008time,gao2023EsaNet} have pointed out that the wireless PLA designs should be used in rich scattering environments. However, this is not a well-defined standard. A practical wireless fading environment may not be random as assumed in a design, which offers a potential opportunity for the adversary Eve to crack the secret key or compromise the PLA between Alice and Bob. A clear, measurable guideline has yet been established to answer when or under what conditions we can securely use wireless PLA. 
	
	In this paper, we aim to fill a critical gap between existing research efforts for wireless PLA and the practical issue of its usability: when should Alice and Bob use the PLA? When the wireless subchannels is not quite varying, a PLA design presuming that the channel is sufficiently random and hiding the key behind the channel randomness would fail. But how can the attacker Eve compromise the design under such a condition? It is necessary to design a formal adversary model for Eve that leverages the correlated channel responses to launch an attack. We propose a novel attack model called the Maximum Differential Likelihood Generator (MDLG), which leverages the correlation in wireless channel responses to guess what Alice and Bob's secret key would be given the observation of their signals received at Eve. We formally derive the attack success probability of the MDLG attack and define the security strength that a PLA design can achieve under a wireless channel condition. 
	
	Under the MDLG attack model, we should not use wireless PLA when the wireless channel is not sufficiently random to achieve a target security strength. Therefore, the question of when to use such an PLA design depends on whether the wireless channel quality meets the target security strength. This means we need to evaluate the channel to determine when it is random enough to support PLA. As a result, we propose to use randomness tests \cite{rukhin2001statistical} to evaluate the channel responses and create a design guideline of rejecting the use of PLA when the predicted attack success probability under MDLG is higher than a target security strength (i.e., the maximally allowable attack success probability for a design).

	To the best of our knowledge, the proposed MDLG model is the first adversary model for formal security analysis in wireless PLA. The proposed design guideline leverages randomness testing to determine when the wireless channel can be securely utilized for PLA and when it cannot. Our work offers important and complementary insights in contrast to existing studies \cite{wu2016artificial,fang2018learning,xiao2008using,xie2021physical,xiao2017phy,shan2013phy,zhang2020physical,wang2022csi,lu2021improved} that primarily concentrate on designing technical procedures of PLA.
	
	We conduct extensive real-world experiments to show that our guideline can provide the security guarantee in PLA by using commodity WiFi devices Atheros AR5822/AR9580 chipsets and TP-Link WDR4300 AP. The experimental results demonstrate that (i) MDLG is a powerful attack with high attack success probabilities against PLA in practical scenarios, and (ii) by using our design guideline, PLA achieves a high-level security strength under different (even defective) channel conditions. In summary, the main contributions of this paper are as follows.
	
	\begin{enumerate}
		\item We introduce a formal adversary model MDLG to formalize the security analysis and evaluation of PLA under different wireless channels.
		\item We propose a new design guideline to outline when the random channel can indeed support a target security strength goal and how to maximize the efficiency of using PLA given a security requirement.
		\item We use extensive experiments in real-world environments to show our design guideline guarantees the security strength of PLA.
	\end{enumerate}
	
	The rest of this paper is organized as follows: Section~\ref{Sec:background} introduces the background of PLA and our design motivation. Section~\ref{Sec:Adversary} presents the detailed MDLG adversary strategy. Section~\ref{Sec:Defense Method} proposes the defense guideline for security guarantee. Section~\ref{Sec:experiment} shows the results of real-world experiments. Section~\ref{Sec:Relatework} summarizes related works, followed by the conclusion in Section~\ref{Sec:Conclusion}.
	
	\section{Background, Modeling and Motivation}\label{Sec:background}
	In this section, we first introduce the background of wireless key-based challenge-response PLA between two users Alice and Bob. Then, we explain why this PLA scheme can ensure security. Finally, we outline a potential vulnerability and introduce our design motivation.

	\subsection{Basics in Physical Layer Authentication}
	We consider a typical Orthogonal Frequency Division Multiplexing (OFDM) communication system between Alice and Bob as OFDM has been widely adopted in today's wireless networking and is commonly used in existing PLA designs \cite{Melki2019Survey,hou2012CFO,Chin2015Fading,perazzone2021artificial}. Assume that Alice and Bob use $L$ OFDM subcarriers (or subchannels) for communication and they pre-share a binary sequence unknown to others as their secret key $\bm{s}$ that consists of $S$ bits.
	
	Fig.~\ref{Fig:basickeybased} shows a basic outline of the challenge-response PLA scheme. The PLA process involves four steps \cite{wu2014physical, shan2013phy, zeng2010non, du2014physical, choi2018coding, wu2016artificial}: 1) Alice sends a random challenge signal $S_A$ to start the PLA; 2) $S_A$ goes through the channel and Bob receives it as $R_B$; 3) Bob generates his response signal $S_B$ based on $R_B$ and the shared key $\bm{s}$; 4) the response signal $S_B$ travels through the channel and is received by Alice as $R_A$, based on which Alice verifies Bob.
	
	Specifically, to initiate the PLA, Alice first sends Bob a challenge signal \cite{wu2016artificial}, which can be represented as a frequency domain vector as
	\begin{align}
		S_A=[e^{j\beta_0}, ...,e^{j\beta_{L-1}}], 
	\end{align}
	where  $\beta_l$ is a random transmit phase on the $l$-th subcarrier. This signal will go through the wireless channel with responses denoted as
	\begin{align}\label{Eq:AliBobChannel}
		\bm h = [a_0 e^{j \theta_0}, \ldots, a_{L-1} e^{j \theta_{L-1}}],
	\end{align}
	where $a_l$ is the amplitude response and $\theta_l$ is the phase shift on subcarrier~$l$. Then, Bob's received signal is 
	\begin{equation}\label{Eq:ReceivedSignalBob}
		\begin{split}
			R_B=\left[a_0e^{j(\theta_0+\beta_0)},...,a_{L-1}e^{j(\theta_{L-1}+\beta_{L-1})}\right].
		\end{split}
	\end{equation}
	To send a response signal, Bob first uniformly separates the secret key $\bm{s}$ into $L$ sub-keys denoted as $s_0, \dots, s_{L-1}$, then uses a one-one mapping function $M$ to map each sub-key $s_l$ to phase $\phi_l$ ($l\in\{0,...,L-1\}$); i.e., 
	\begin{equation}\label{Eq:MappingM}
		\phi_l = M(s_l) \in \bm{\Phi}.
	\end{equation}
	where $\bm{\Phi}$ is the set of all possible mapped phases (e.g., $\bm{\Phi} = \{0,\pi\}$ in a binary mapping with bits 0 and 1 mapped to phases 0 and $\pi$, respectively). All the $\phi_l$ values form a key-mapped phase sequence, denoted as 
	\begin{equation}\label{Eq:ppp}
		\begin{split}
			\bm \phi = [\phi_0, \ldots, \phi_{L-1}].
		\end{split}
	\end{equation}
	Next, Bob negates the phases of $R_B$ in \eqref{Eq:ReceivedSignalBob} and adds phase sequence $\bm\phi$ to obtain the response signal $S_B$ as
	\begin{equation}\label{Eq:ChallengeSignal}
		\begin{split}
			S_B=\left[a_0e^{j\left(\phi_0-(\theta_0+\beta_0)\right)},...,a_{L-1}e^{j\left(\phi_{L-1}-(\theta_{L-1}+\beta_{L-1})\right)}\right],
		\end{split}
	\end{equation}
	which is then transmitted by Bob to Alice. Since the duration of the challenge-response procedure is typically within the channel coherence time, the wireless channel can be considered time-invariant during this period. Under such conditions, channel reciprocity holds, and Alice’s received signal is
	\begin{equation}\label{Eq:Alicereceive}
		\begin{split}
			R_A=\left[{a_0}^2e^{j(\phi_0-\beta_0)},...,{a_{L-1}}^2e^{j(\phi_{L-1}-\beta_{L-1})}\right].
		\end{split}
	\end{equation}
	Since the phases of $S_A$ (i.e., $[\beta_0, \ldots, \beta_{L-1}]$) and the key-mapped phases (i.e., $[\phi_0, \ldots, \phi_{L-1}]$) are all known to Alice, she can pre-compute an expected phase of the response signal received on subcarrier~$l$ as $(\phi_l - \beta_l)$. Then, she compares the expected phase with the actual phase received in \eqref{Eq:Alicereceive} on each subcarrier. If all phases match, Alice can verify that the response signal was indeed sent by the legitimate user Bob.

	\begin{figure}[t!] 
		\centering
		\includegraphics[width=0.43\textwidth]{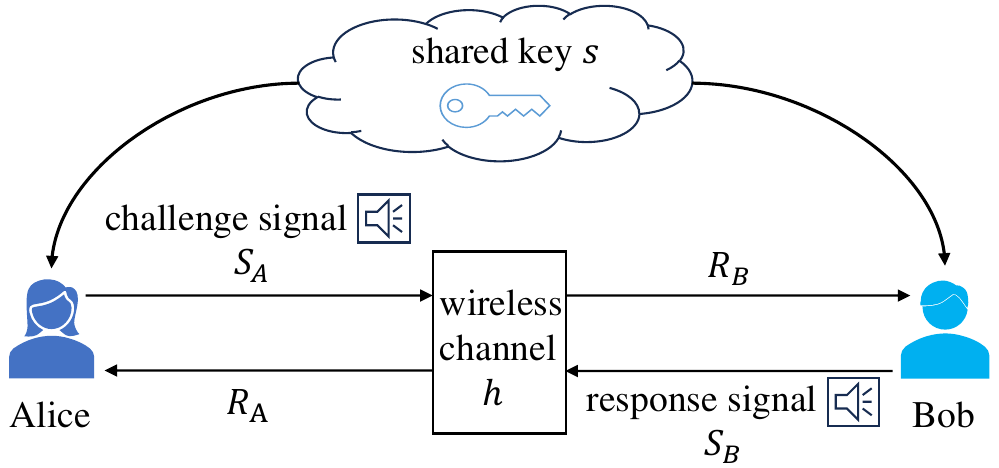}
		\caption{Challenge-response PLA scheme.}
		\label{Fig:basickeybased}
	\end{figure}
	
	\subsection{Assumptions and Modeling for Attacker Eve}
	We consider that the eavesdropper Eve can observe the entire PLA process between Alice and Bob, as illustrated in Fig.~\ref{Fig:Eveperspective}.
	We assume that Eve has full knowledge of their communication setup (e.g., number of subcarriers, carrier frequencies, communication bandwidth, and the key-phase mapping function $M$), and is capable of accurately estimating the channel responses between herself and the legitimate users (i.e., Alice or Bob). Given a possible key candidate $\bm s'$, Alice can verify whether $\bm s'$ is indeed the secret key used by Alice and Bob. However, Eve does not have access to the channel response between Alice and Bob. Eve's objective is to obtain Alice and Bob's key $\bm s$ based on Alice's challenge signal $S_A$ and Bob's response signal $S_B$. 
	
	\begin{figure}[t!]
		\centering
		\includegraphics[width=0.41\textwidth]{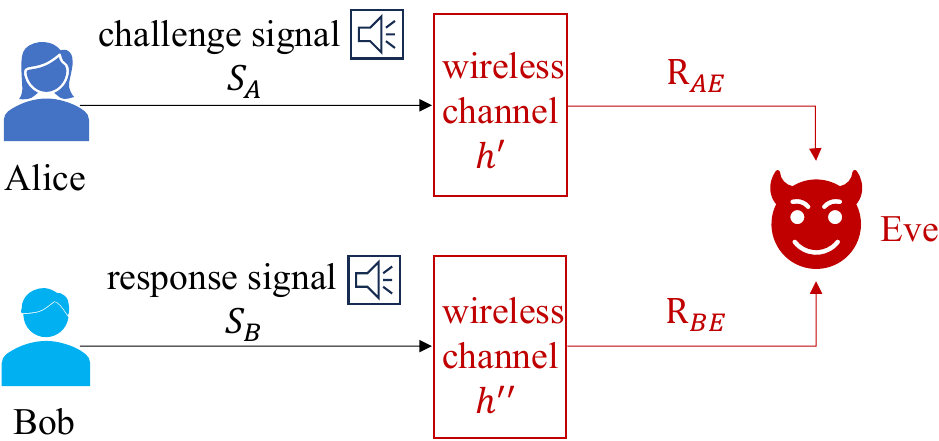}
		\caption{What Eve can observe from Alice and Bob.}
		\label{Fig:Eveperspective}
	\end{figure}
	
	When Alice sends the challenge signal $S_A$ to Bob, Eve can also receive this as $R_{AE}$. Denote channel response between Alice and Eve as
	\begin{equation}
		\begin{split}
			h^{'}=[a_{0}^{'}e^{j\theta_{0}^{'}} ,...,a_{L-1}^{'} e^{j\theta_{L-1}^{'}}],
		\end{split}
	\end{equation}
	where $a_l^{'}$ is the amplitude response and $e^{j \theta_l^{'}}$ is the phase response on subcarrier $l$. Then, the signal received by Eve from Alice is
	\begin{equation}\label{Eq:RAE}
		\begin{split}
			R_{AE}=\left[a_0^{'}e^{j(\theta_0^{'}+\beta_0)},...,a_{L-1}^{'}e^{j(\theta_{L-1}^{'}+\beta_{L-1})}\right].
		\end{split}
	\end{equation}
	Because $[\theta_{0}^{'},...,\theta_{L-1}^{'}]$ is known to Eve, she can extract  $[\beta_0, \ldots, \beta_{L-1}]$ from her received signal $R_{AE}$.
	Similarly, denote the channel response between Bob and Eve as
	\begin{equation}
		\begin{split}
			h^{''}=[a_{0}^{''}e^{j\theta_{0}^{''}} ,...,a_{L-1}^{''} e^{j\theta_{L-1}^{''}}],
		\end{split}
	\end{equation}
	where $a_l^{''}$ is the amplitude response and $e^{j \theta_l^{''}}$ is the phase response on subcarrier $l$. Bob's response signal at Eve $R_{BE}$ can be represented as
	\begin{equation}\label{Eq:RBE} 
		\begin{split}
			R_{BE}=\Big[&a_0a_0^{''}e^{j\left(\phi_{0}-(\theta_{0}+\beta_0)+\theta_{0}^{''}\right)},\\
			&...,a_{L-1}a_{L-1}^{''}e^{j\left(\phi_{L-1}-(\theta_{L-1}+\beta_{L-1})+\theta_{L-1}^{''}\right)}\Big].
		\end{split}
	\end{equation}
	Since $[\theta_{0}^{''},...,\theta_{L-1}^{''}]$ and $[\beta_0, \ldots, \beta_{L-1}]$ are known to Eve, the best phase information she can get by comparing the phases between \eqref{Eq:RAE} and \eqref{Eq:RBE} is 
	\begin{equation}\label{Eq:z}
		\begin{split}
			\bm z= [z_1, \ldots, z_{L-1}] = [\phi_0 - \theta_0, \ldots, \phi_{L-1} - \theta_{L-1}].
		\end{split}
	\end{equation}
	As Eve lacks access to the channel between Alice and Bob, she has no knowledge of $[\theta_0, \ldots, \theta_{L-1}]$. Most existing studies \cite{Ze2010,xiaowen2003adaptive,yaghoobi2004scalable,tsao2014performance,mabrouk2012effect} implicitly assume that $\theta_0$, $\ldots$, $\theta_{L-1}$ are independently random due to scatter-rich wireless environments. Under such an assumption, there is no way for Eve to obtain Alice and Bob's key-mapped $\bm \phi=[\phi_0, \ldots, \phi_{L-1}]$ by only obtaining $\bm z$ in \eqref{Eq:z}. 
	
	\subsection{Potential Vulnerability and Attack Design Motivation}
	It is worth noting that the assumption that the channel responses in rich-scattered environments are independent is commonly used for analysis and evaluation of communication performance, rather than security modeling \cite{goldsmith2005wireless}. In a real-world scenario, adjacent OFDM subchannel responses can show a correlation property, especially when the subcarrier spacing is smaller than the coherence bandwidth \cite{hwang2008ofdm}. {\hlb This is because practical wireless environments inherently involve multipath propagation, which leads to a time-domain delay spread and hence a corresponding coherence bandwidth. Meanwhile, to mitigate inter-symbol interference (ISI), modern OFDM systems are typically designed with a subcarrier spacing much smaller than the coherence bandwidth. Consequently, adjacent subcarriers often lie within the same coherence bandwidth and thus experience highly similar amplitude and phase fading.} To illustrate this correlation intuitively, we conducted an indoor OFDM experiment with a 20MHz bandwidth and 56 subcarriers. As shown in Fig.~\ref{Fig:ExamplePhase}, for the four different signals, the phase responses of adjacent OFDM subchannels do not differ significantly and the overall curves are relatively smooth. Specifically, the phase difference between two adjacent subcarriers is approximately $\frac{\pi}{8}$, indicating a non-negligible correlation between their channel responses.
	By calculating the correlation between adjacent subchannel responses and summarizing the results in Fig.~\ref{Fig:ExampleCoefficient}, it can be seen that the distribution of the correlation coefficients is similar to a normal distribution, with a mean of approximately $0.35$.
	\begin{figure}[t!]
		\centering
		\begin{minipage}{0.49\columnwidth}
			\centering		
			\includegraphics[width=1.04\textwidth]{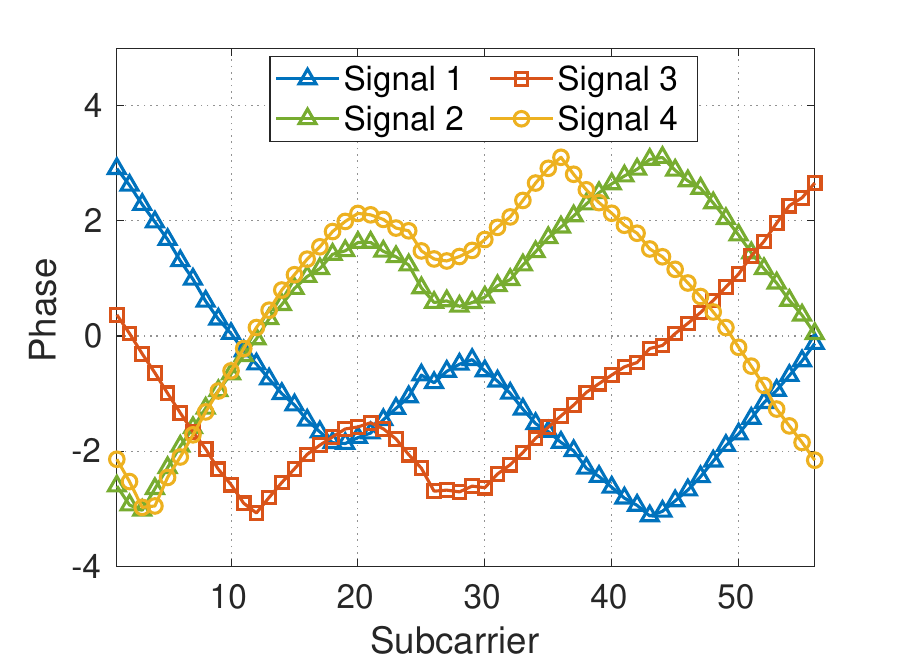}
			\caption{Phase response of subcarriers.}
			\label{Fig:ExamplePhase}
		\end{minipage}
		\hfill
		\begin{minipage}{0.49\columnwidth}
			\centering
			\raisebox{0.02\height}{\includegraphics[width=\textwidth]{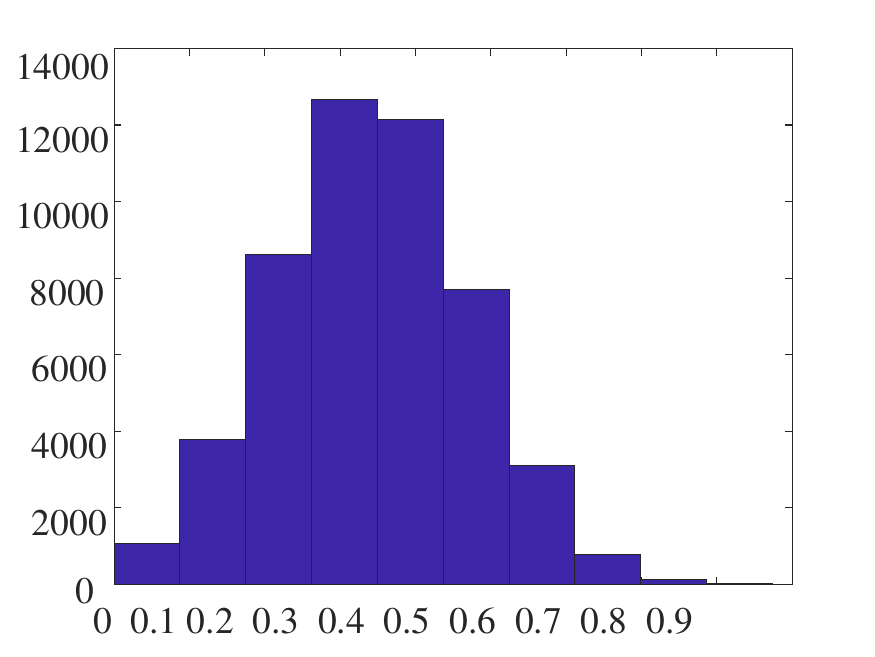}}
			\caption{Correlation coefficient between channel responses.}
			\label{Fig:ExampleCoefficient}
		\end{minipage}
	\end{figure}
	
	Due to the weak correlation property observed in practice, we aim to investigate how Eve can extract useful information about the secret key $\bm s$ from the adjacent subcarrier responses during Alice and Bob's PLA process.

	\section{Adversary Strategy: The MDLG Attack}\label{Sec:Adversary}
	In this section, we will introduce a new adversary model for Eve to attack the PLA (i.e., aiming to infer Alice and Bob's secret key $\bm s$ based on Eve's observed signals). First, we explain how Eve can use her observations to create a potential attack. Then, we propose the MDLG attack strategy and detail the design. Finally, we formalize the attack model and analyze its success probability.

	\subsection{Leverages Eve's Observations Under Ideal Scenario} \label{Sec:ideal}
	
	From \eqref{Eq:ppp} and \eqref{Eq:z}, we know that the key-mapped phase sequence is $\bm \phi = [\phi_0, \ldots, \phi_{L-1}]$ and Eve's best information inferred from Alice's challenge signal \eqref{Eq:RAE} and Bob's response signal \eqref{Eq:RBE} is $\bm z=[\phi_0 - \theta_0, \ldots, \phi_{L-1} - \theta_{L-1}]$ in \eqref{Eq:z}. We let Eve look at the difference between adjacent elements in $\bm z$. In particular, for any $l\in\{0,...,L-2\}$, 
	\begin{equation}\label{Eq:difference}
		\begin{split}
			(z_{l+1}-z_{l}) = (\phi_{{l+1}} - \phi_{{l}}) - (\theta_{{l+1}} - \theta_{{l}}).
		\end{split}
	\end{equation}
	We first consider an ideal scenario for Eve, where Alice's and Bob's OFDM signals both go through flat-fading, noise-free environments \cite{ma2005novel} (i.e., the correlation coefficient between any pair of OFDM subchannels is 1). In this case, it is clear from \eqref{Eq:difference} that $(\theta_{{l+1}} - \theta_{{l}})=0$ and Eve can further obtain
	\begin{equation}\label{Eq:DiffPhi}
		\phi_{{l+1}} - \phi_{{l}} = z_{l+1}-z_{l}.
	\end{equation}
	This means Eve can know the difference between any pair of adjacent elements in $\bm \phi$, i.e., $(\phi_{{1}} - \phi_{{0}})$, ..., $(\phi_{{L-1}} - \phi_{{L-2}})$. If $\phi_0$ is determined, Eve can in turn obtain the values of $\phi_1$, $\phi_2$, ..., $\phi_{L-1}$ one by one as follows:
	\begin{equation}\label{Eq:Getphi}
		\begin{split}
			&	\phi_1=\phi_0+(\phi_{{1}} - \phi_{{0}}),\\ 
			&	\phi_2=\phi_1+(\phi_{{2}} - \phi_{{1}}),\\ 
			&~~~~~~~~~~\vdots\\
			&	\phi_{L-1}=\phi_{L-2}+(\phi_{{L-1}} - \phi_{{L-2}}).
		\end{split}
	\end{equation}
	Then, Eve can reconstruct $\bm{\phi}$, and by reversing the key-phase mapping function $M$ in \eqref{Eq:MappingM}, she can eventually determine Alice and Bob's shared secret key $\bm{s}$.

	More formally, define two differential phase sequences
	\begin{equation}\label{Eq:Deltaz}
		\begin{split}
			\bm\Delta^{\bm z}=[(z_1-z_0), ..., (z_{L-1}-z_{L-2})]
		\end{split}
	\end{equation}
	and 
	\begin{equation}\label{Eq:DeltaPhi}
		\begin{split}
			\bm\Delta^{\bm \phi}=[(\phi_{1}-\phi_{0}), ..., (\phi_{L-1}-\phi_{L-2})].
		\end{split}
	\end{equation}
	In this ideal flat fading example, Eve can obtain the differential key-mapped phase sequence $\bm\Delta^{\bm \phi}$ by directly letting $\bm\Delta^{\bm \phi} = \bm\Delta^{\bm z}$. If $\phi_0$ is determined, Eve can obtain $\bm \phi$ based on $\eqref{Eq:Getphi}$. Subsequently, Eve reverses the mapping $M$ for $\bm \phi$ to obtain a potential key sequence ${\bm s'}$ as
	
	\begin{equation}\label{Eq:ReverseMapping}
		{\bm s'} = [M^{-1}(\phi_0), ..., M^{-1}(\phi_{L-1})].
	\end{equation}

	As $\phi_0 \in \bm\Phi$ in $\eqref{Eq:MappingM}$, which consists of a limited number of  possible phases, Eve can enumerate each possible phase for $\phi_0$, obtain a corresponding key candidate based on $\eqref{Eq:Getphi}$ and \eqref{Eq:ReverseMapping}, then verify whether the candidate is the true key ${\bm s}$.

	\subsection{Attack Design Basics Under Realistic Scenarios}
	Now, we present the attack design under real-world conditions. In practice, adjacent OFDM subchannel responses are typically correlated, and the correlation coefficient does not reach $1$ (e.g., with a mean value of $0.35$, as shown in Fig.~\ref{Fig:ExampleCoefficient}). This implies that term $(\theta_{{l+1}} - \theta_{{l}})$ in \eqref{Eq:difference} is nonzero and \eqref{Eq:DiffPhi} does not hold. However, based on \eqref{Eq:difference}, $(\theta_{l+1} - \theta_{l})$ can be regarded as an offset added to the right-hand side of \eqref{Eq:DiffPhi}. This suggests that $\bm{\Delta}^{\bm{z}}$ can still serve as a reference for estimating $\bm{\Delta}^{\bm{\phi}}$, particularly when the correlation between adjacent subcarriers is strong. This offset decreases as the correlation increases, and approaches zero when the correlation coefficient reaches 1. In this case, the scenario becomes equivalent to the ideal case described in Section~\ref{Sec:ideal}.

	As shown in \eqref{Eq:ReverseMapping}, finding the true key ${\bm s}$ involves converting a phase sequence into a potential key bit sequence determined by the mapping function $M$ in \eqref{Eq:MappingM}. We first consider $M$ as a binary mapping function, which is commonly used in existing studies \cite{jeon2012dual,wu2016artificial,21qzx-tifs}, to describe the intuitive basic attack design. Without loss of generality, assume that $M$ maps bits 0 and 1 to phases 0 and $\pi$, respectively. Under the binary mapping, the bit length $S$ of secret key $\bm s$ is equal to the number of subcarriers $L$.
	
	Based on \eqref{Eq:difference}, each phase in $\bm{\Delta}^{\bm{z}}$ generally does not take a value of exactly $0$ or $\pi$, as is the case for $\bm{\Delta}^{\bm{\phi}}$. To better exploit the relationship between $\bm{\Delta}^{\bm{z}}$ and $\bm{\Delta}^{\bm{\phi}}$, Eve first quantizes each phase $\Delta_l^{\bm{z}} \in \bm{\Delta}^{\bm{z}}$ in \eqref{Eq:Deltaz} into a bit $b^{\bm{z}}_l$ for $l \in \{0, 1, \ldots, L - 2\}$ as
	\begin{equation}\label{Eq:quantization}
		b^{\bm z}_{l}= 
		\begin{cases} 
			0, & \text{if~ } \Delta_{l} \in (-\frac{\pi}{2}, \frac{\pi}{2}] \\
			1, & \text{if~ } \Delta_{l} \in (-\pi, -\frac{\pi}{2}] \cup (\frac{\pi}{2}, \pi],
		\end{cases}
	\end{equation}
	where $(-\frac{\pi}{2}, \frac{\pi}{2}]$ and $(-\pi, -\frac{\pi}{2}] \cup (\frac{\pi}{2}, \pi]$ denote the phase regions close to $0$ (bit $\bm 0$ under binary mapping) and $\pi$ (bit $\bm 1$ under binary mapping), respectively. 
	As a result, Eve obtains a differential bit sequence 
	\begin{equation}\label{Eq:bz}
		\bm b^{\bm z} = [b^{\bm z}_0, ..., b^{\bm z}_{L-2}].
	\end{equation}
	Similarly, each phase $\Delta_l^{\bm{\phi}} \in \bm{\Delta}^{\bm{\phi}}$ in \eqref{Eq:DeltaPhi} is quantized into a bit $b^{\bm{\phi}}_l$ for $l \in \{0, 1, \ldots, L - 2\}$ using the quantization rule in \eqref{Eq:quantization}, from which we can obtain differential bit sequence 
	\begin{equation}\label{Eq:bphi}
		\bm b^{\bm \phi} = [b^{\bm \phi}_0, ..., b^{\bm \phi}_{L-2}].
	\end{equation}
	For each element $b^{\bm \phi}_l$, it represents the bitwise difference between two adjacent bits $s_l$ and $s_{l+1}$ in the key $\bm s$ (i.e., $b^{\bm \phi}_l$ is 0 if $s_l$ and $s_{l+1}$ have the same value or 1 otherwise). If Eve knows $\bm b^{\bm \phi}$, there are only two possible key candidates by setting the first bit to be either 0 or 1. Then, Eve can verify both candidates to find the true key $\bm s$. 
	
	As discussed above, since $\bm{\Delta}^{\bm{z}}$ can serve as a reference for estimating $\bm{\Delta}^{\bm{\phi}}$, the corresponding sequence $\bm{b}^{\bm{z}}$ can likewise be used as an estimate of $\bm{b}^{\bm{\phi}}$. In this regard, Eve can set $\bm b^{\bm z}$ as the most likely candidate for $\bm b^{\bm \phi}$ and verify $\bm b^{\bm z}$ first. However, in many cases, $\bm b^{\bm z}$ may not be exactly the same as $\bm b^{\bm \phi}$. Eve needs to select more other candidates for $\bm{b}^{\bm{\phi}}$ to verify. Under weak subcarrier correlation, she can start from changing 1 bit in $\bm b^{\bm z}$ to get more candidates, then change 2 and more bits until the true key $\bm s$ is identified. {\hlb Specifically, for each candidate for $\bm{b}^{\bm{\phi}}$, denoted by $\hat{b}^{\phi}$, Eve would operate as follows to verify its corresponding key candidates: she first initializes the key candidates reconstruction process by assuming the first bit of the secret key to be $\hat{s}_0 \in \{0, 1\}$; then, the subsequent key bits are determined iteratively based on the candidate sequence $\hat{b}^{\phi}$, such that the next key bit $\hat{s}_{l+1}$ is set equal to $\hat{s}_l$ if the candidate bit $\hat{b}_l^{\phi} = 0$ (indicating identical adjacent bits), and $\hat{s}_{l+1} = \bar{\hat{s}}_l$ if $\hat{b}_l^{\phi} = 1$ (indicating a bit flip), where $\bar{\hat{s}}_l$ denotes the bitwise inverse of $\hat{s}_l$; consequently, this procedure yields two key candidates, which Eve then verifies to determine if they match the true secret key $\bm{s}$.}
	
	\begin{figure}[t!]
		\centering
		\includegraphics[width=0.8\linewidth]{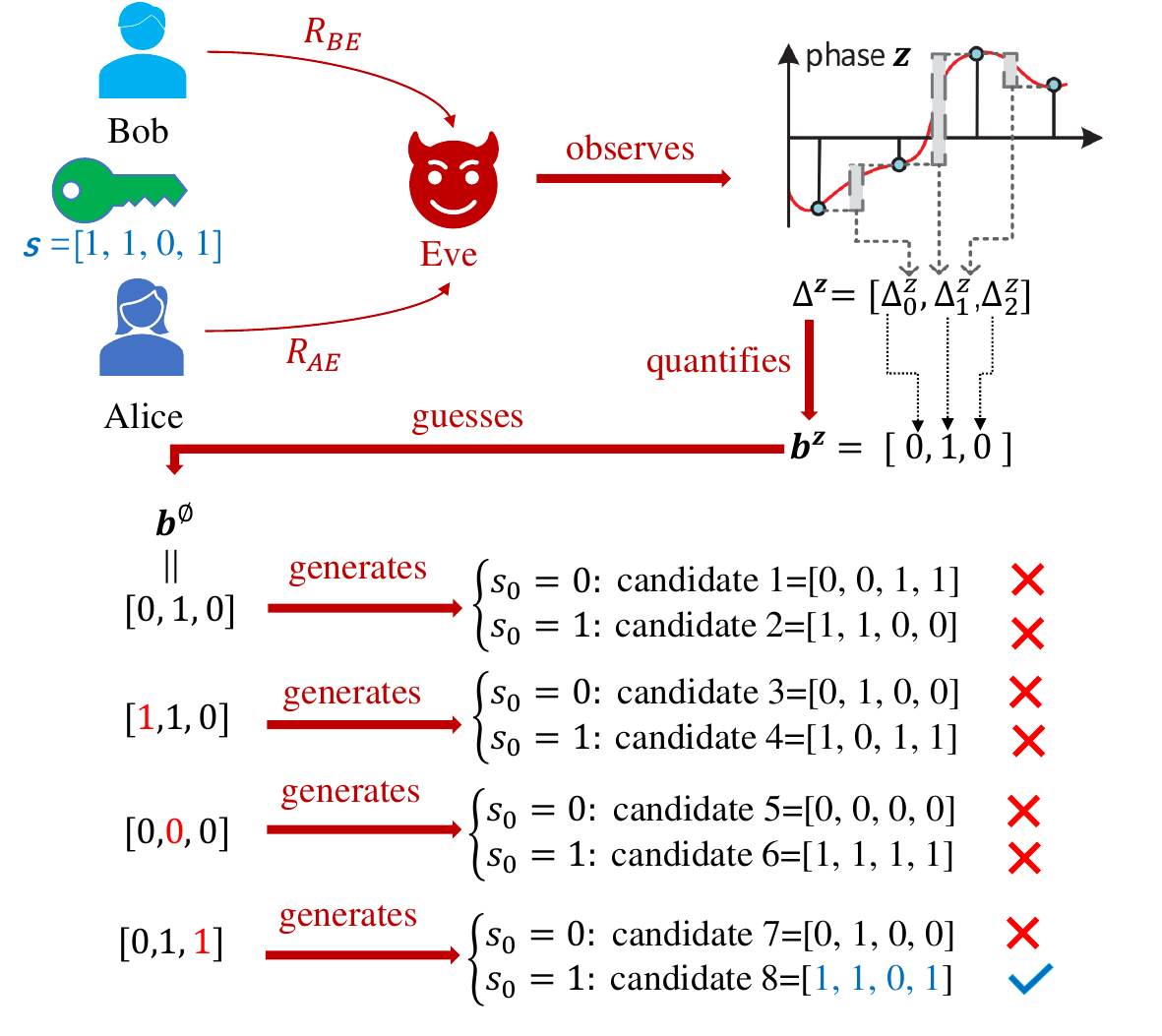}
		\caption{A basic attack design.} 
		\label{Fig:BasicDesign}
	\end{figure}
	
	Fig.~\ref{Fig:BasicDesign} illustrates an example of the basic attack design, where the true secret key is $\bm{s} = [1, 1, 0, 1]$. In this example, Eve first obtains her observation $\bm{z}$, from which she derives the differential phase sequence $\bm{\Delta}^{\bm{z}}$, and then quantizes $\bm{\Delta}^{\bm{z}}$ into a differential bit sequence $\bm{b}^{\bm{z}} = [0, 1, 0]$. Eve uses $\bm{b}^{\bm{z}} = [0, 1, 0]$ as the first candidate for $\bm{b}^{\bm{\phi}}$. She sets the first bit of the key $\bm{s}$ to 0 and 1, and generates two key candidates: $[0, 0, 1, 1]$ and $[1, 1, 0, 0]$, but fails to match the actual key. Then, Eve changes $1$ bit in $\bm b^{\bm z} = [0, 1, 0]$, generates new key candidates and continues to verify. After multiple attempts, when Eve flips the last bit in $\bm{b}^{\bm{z}} = [0, 1, 0]$ to obtain $[0, 1, 1]$ as a candidate for $\bm{b}^{\bm{\phi}}$ and sets the first bit of the key $\bm{s}$ to 1, she eventually verifies that the key candidate $[1, 1, 0, 1]$ is the true key.

	\subsection{The MDLG Strategy}
	As shown in the example in Fig.~\ref{Fig:BasicDesign}, the key idea of Eve's attack strategy is to use $\bm b^{\bm z}$ as a reference to guess $\bm{b}^{\bm{\phi}}$. This guessing is conducted in descending order of likelihood to select candidates for $\bm{b}^{\bm{\phi}}$: starting from the most likely candidate (i.e., the reference $\bm b^{\bm z}$) and moving towards less likely ones by gradually changing more bits in the reference $\bm b^{\bm z}$. We call this strategy maximum differential likelihood generator (MDLG). {\hlb It is worth noting that $\bm b^{\bm z}$ and $\bm{b}^{\bm{\phi}}$ depend only on the relative phase differences between adjacent OFDM subchannels of $\bm z$ and $\bm \phi$ (i.e., $\bm\Delta^{\bm z}$ and $\bm\Delta^{\bm \phi}$), and are independent of the absolute phase values of $\bm z$ and $\bm \phi$; therefore, adopting different quantization thresholds for the key mapping function $M(\cdot)$ will not affect the attack performance.} In the following, we detail the working steps of MDLG and mathematically analyze its attack performance.
	
	\subsubsection{Strategy Design}
	The MDLG strategy consists of two major steps. 
	
	{\noindent\bf{Step 1: Computing the differential sequence $\bm{b}^{\bm{z}}$.}}
	Eve first extracts the phase sequence $\bm{z}$ from \eqref{Eq:z} by comparing the phases in the observations \eqref{Eq:RAE} and \eqref{Eq:RBE}. She then computes the differential phase sequence $\bm{\Delta}^{\bm{z}}$ using \eqref{Eq:Deltaz}, and quantizes $\bm{\Delta}^{\bm{z}}$ based on \eqref{Eq:quantization} to obtain the differential bit sequence $\bm{b}^{\bm{z}}$.
	
	{\noindent\bf{Step 2: Selecting and verifying candidates by gradually changing bits in $\bm b^{\bm z}$.}}
	Eve uses $\bm b^{\bm z}$ as the first candidate for $\bm b^{\bm \phi}$ and generates two corresponding key candidates by setting the first bit in key $\bm s$ to $0$ and $1$, respectively. Eve verifies both candidates to see if she finds the true key $\bm s$. If Eve cannot, she changes $1$ bit in $\bm b^{\bm z}$, enumerates and verifies all possible key candidates under the 1-bit change. If the true key still cannot be found, Eve proceeds to change $2$ or more bits for further verification, until the true key is identified.

	\subsubsection{Performance Analysis of MDLG}
		It is clear that in practice, Eve has a limited computational capability and cannot enumerate all possible key candidates to verify within a reasonable time period. For example, given a key size of 128 bits, it is computationally infeasible for Eve to verify $2^{128}$ possibilities based on today's computing power. As a result, we assume that Eve can only verify up to $N$ possibilities starting from its reference $\bm b^{\bm z}$, where $N$ is called Eve's computing capability. 
		\begin{theorem}\label{The:MDLG}
			Given Eve's capability $N$ and the correlation coefficient among OFDM subchannels $\rho$, the success probability of MDLG can be expressed as
			\begin{equation}\label{Eq: MDLGforRho}
				P_{\text{MDLG}} 
				= {I_{\left(\frac{1-\rho}{2}\right)}\left(L-1-n_{\max}, n_{\max}+1\right)},
			\end{equation}
			where $n_{\max}=\max\{n' | 2\sum_{n=0}^{n'}\binom{L-1}{n} \leq N\}$, and $I_x(a,b) = \frac{B(x;a,b)}{B(1;a,b)}$ represents the regularized incomplete beta function with incomplete beta function $B(x;a,b) = \int_0^x t^{a-1}(1-t)^{b-1} \, dt$ and complete beta function $B(a,b) = B(1;a,b)$.
		\end{theorem}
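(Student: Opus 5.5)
We view the MDLG search as exhausting Hamming balls around the reference $\bm b^{\bm z}$ and reduce the success event to a binomial tail. Each candidate $\hat{b}^{\phi}$ for $\bm b^{\bm\phi}$ produces two key candidates, one for each value of the first key bit. Changing exactly $n$ bits of $\bm b^{\bm z}$ gives $\binom{L-1}{n}$ candidates. MDLG verifies all candidates at Hamming distance $0,1,\dots$ in order, so with budget $N$ Eve completely verifies every candidate within distance $n_{\max}$ of $\bm b^{\bm z}$. Here $n_{\max}$ is the largest $n'$ with $2\sum_{n=0}^{n'}\binom{L-1}{n}\le N$, as in the statement. We adopt the paper's convention that only these fully enumerated layers are counted, and ignore any partial next layer. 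Eve therefore succeeds exactly when $d_H(\bm b^{\bm z},\bm b^{\bm\phi})\le n_{\max}$, i.e., when the number of positions $l$ with $b^{\bm z}_l\ne b^{\bm\phi}_l$ is at most $n_{\max}$.

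Next we find the per-position mismatch probability. By \eqref{Eq:difference}, $\Delta^{\bm z}_l=\Delta^{\bm\phi}_l-(\theta_{l+1}-\theta_l)$. Since $\Delta^{\bm\phi}_l\in\{0,\pi\}$, the quantized bit flips exactly when the channel offset $\theta_{l+1}-\theta_l$, wrapped to $(-\pi,\pi]$, falls outside $(-\frac{\pi}{2},\frac{\pi}{2}]$. Under the quantization rule \eqref{Eq:quantization}, this is the event that the adjacent responses have negative "sign agreement" in the $\cos$ sense. We model this with $\rho$ interpreted as $\mathbb{E}[\cos(\theta_{l+1}-\theta_l)]$-type agreement, or equivalently as the correlation between the $\pm1$ indicators $\mathrm{sgn}\cos(\cdot)$. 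The flip probability is then $p=\Pr[\text{flip}]=\frac{1-\rho}{2}$: if $X,Y\in\{\pm1\}$ are symmetric with correlation $\rho$, then $\Pr[X\ne Y]=\frac{1-\rho}{2}$. Fixing this modeling step is the main obstacle. The statement does not specify the joint phase distribution, so we would state explicitly that $\rho$ is the correlation of the binary phase-agreement indicators, or of a model in which it reduces to this. We also assume the $L-1$ differential flips are i.i.d., treating correlation only between adjacent pairs and independence otherwise.

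The distance is then $D\sim\mathrm{Bin}(L-1,p)$, and $P_{\text{MDLG}}=\Pr[D\le n_{\max}]=\sum_{k=0}^{n_{\max}}\binom{L-1}{k}p^k(1-p)^{L-1-k}$. The last step applies the standard identity $\Pr[\mathrm{Bin}(m,p)\le k]=I_{1-p}(m-k,k+1)$ with $m=L-1$ and $k=n_{\max}$. It can be verified by differentiating $I_{1-p}(m-k,k+1)$ in $p$ and matching the telescoping derivative of the binomial sum, or by the order-statistics argument with uniforms. Since $1-p=\frac{1+\rho}{2}$, this gives $I_{\frac{1+\rho}{2}}(L-1-n_{\max},n_{\max}+1)$. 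The displayed formula uses $\frac{1-\rho}{2}$, so the final step must reconcile the orientation: the stated argument corresponds to the flip probability $1-p$. I would therefore recheck the convention for $\rho$ and the direction of the identity. The symmetry $I_x(a,b)=1-I_{1-x}(b,a)$ is the tool to convert between the two forms, and I would adopt whichever interpretation of $\rho$ makes the statement consistent.
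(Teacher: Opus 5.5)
Your route is the same as the paper's. The paper also reduces success to the event that the XOR sequence $\bm b^{\bm z\bm\phi}=\bm b^{\bm\phi}\oplus\bm b^{\bm z}$ has weight at most $n_{\max}$. It models the $L-1$ bits as i.i.d.\ Bernoulli with $P_b=\frac{1-\rho}{2}$, which it imports as the transition probability of correlated binary sequences and which matches your $\pm1$-indicator reading. It then sums the binomial terms and converts to a regularized incomplete beta function. Your identity $\Pr[\mathrm{Bin}(m,p)\le k]=I_{1-p}(m-k,k+1)$ is the correct one. The mismatch you found is genuine and is not your error: the paper's proof asserts that the same binomial sum equals $I_{P_b}(L-1-n_{\max},n_{\max}+1)$, which puts the argument on the wrong side of the identity, and the displayed formula inherits this.

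The flaw in your proposal is the final move of adopting ``whichever interpretation of $\rho$ makes the statement consistent''. No such interpretation exists, because the statement as written is false, and the symmetry cannot rescue it either.

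\begin{itemize}
\item \textbf{Counterexample.} Take $n_{\max}=0$, i.e.\ any $2\le N\le 2L-1$. The true success probability is $(1-P_b)^{L-1}=\left(\frac{1+\rho}{2}\right)^{L-1}$. Since $I_x(L-1,1)=x^{L-1}$, the displayed formula instead gives $\left(\frac{1-\rho}{2}\right)^{L-1}$, the probability that every differential bit is wrong. At $\rho=1$ it returns $0$, yet the flat-fading case of Section~\ref{Sec:ideal} shows Eve succeeds with two guesses.
\item \textbf{Reinterpreting $\rho$ fails.} Agreement would force $P_b=\frac{1+\rho}{2}$, i.e.\ $\rho\mapsto-\rho$. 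That contradicts the hypothesis that $\rho$ is the correlation coefficient. It would also make the success probability decreasing in $\rho$, which the displayed formula indeed is since $I_x$ increases in $x$. This contradicts the paper's own remark that larger $\rho$ yields higher attack success.
\item \textbf{The symmetry fails.} It only gives $I_{\frac{1+\rho}{2}}(L-1-n_{\max},n_{\max}+1)=1-I_{\frac{1-\rho}{2}}(n_{\max}+1,L-1-n_{\max})$, with the parameters swapped and a complement taken.
\end{itemize}

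So you should commit to what your argument actually proves:
\[
P_{\text{MDLG}}=I_{\frac{1+\rho}{2}}(L-1-n_{\max},n_{\max}+1)\quad\text{for } n_{\max}\le L-2 .
\]
If $n_{\max}=L-1$ the probability is $1$ and the beta form is undefined. You should also flag the statement as misoriented. Theorem~\ref{The:mMDLG} has the same flip: its subscript should be $\frac{1+\rho}{2^m}$.

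A minor point: only the $\pm1$-indicator reading gives $P_b=\frac{1-\rho}{2}$ exactly. Taking $\rho=\mathbb{E}[\cos(\theta_{l+1}-\theta_l)]$ does not, so drop the word ``equivalently.''
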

		
		\noindent\textit{Proof:} Based on MDLG, Eve changes one or more bits in $\bm{b}^{\bm{z}}$ to generate candidates for $\bm{b}^{\bm{\phi}}$ and subsequently verifies the corresponding candidates for $\bm{s}$. We use $\bm b^{\bm{z \phi}}$ to represent the difference between $\bm{b}^{\bm{z}}$ and $\bm{b}^{\bm{\phi}}$, i.e., $\bm b^{\bm{z\phi}}=\bm{b}^{\bm{\phi}} \oplus \bm{b}^{\bm{z}}$ where $\oplus$ is the exclusive OR (XOR) operation. Then, the candidate for $\bm{b}^{\bm{\phi}}$ can be obtained by performing XOR between the candidate for $\bm b^{\bm{z \phi}}$ and the given $\bm{b}^{\bm{z}}$. For Eve, this MDLG strategy is equivalent to increasing the value of $n$ gradually to find the true $\bm s$, where $n$ is the number of bits being $\bm 1$ in the candidate for $\bm b^{\bm{z \phi}}$. Let $n_{\max}$ denote the maximum value that $n$ can reach under Eve's capability $N$; that is, the number of bits equals to $\bm 1$ in the candidate for $\bm b^{\bm{z \phi}}$ is at most $n_{\max}$, or Eve can change at most $n_{\max}$ bits in $\bm{b}^{\bm{z}}$ to get the candidate for $\bm{b}^{\bm{\phi}}$. 
		If the bit difference between $\bm{b}^{\bm{z}}$ and $\bm{b}^{\bm{\phi}}$ is no greater than $n_{\max}$, i.e., the number of bits equal to $\bm{1}$ in $\bm{b}^{\bm{z\phi}}$ does not exceed $n_{\max}$, Eve can succeed in finding the true key $\bm s$ with the capability $N$.
		As a result, we proceed to analyze the statistical properties of the bits in $\bm{b}^{\bm{z\phi}}$. 
		We define another differential sequence, similar to $\bm{\Delta}^{\bm{z}}$ in \eqref{Eq:Deltaz} and $\bm{\Delta}^{\bm{\phi}}$ in \eqref{Eq:DeltaPhi}, as
		\begin{equation}\label{Eq:Deltatheta}
			\begin{split}
				\bm\Delta^{\bm\theta}=[(\theta_1-\theta_0), ..., (\theta_{L-1}-\theta_{L-2})].
			\end{split}
		\end{equation}
		Given that the analysis targets the correlation between adjacent subchannels, the elements in $\bm{\Delta}^{\bm{\theta}}$ should be independently and identically distributed (i.i.d.). Based on \eqref{Eq:Getphi}, the differences between the corresponding elements of $\bm{\Delta}^{\bm{z}}$ and $\bm{\Delta}^{\bm{\phi}}$ should also follow an i.i.d. distribution. Since $\bm{b}^{\bm{z}}$ and $\bm{b}^{\bm{\phi}}$ are generated from $\bm{\Delta}^{\bm{z}}$ and $\bm{\Delta}^{\bm{\phi}}$, respectively, according to the quantization rule in \eqref{Eq:quantization}, the differences between the corresponding bits in $\bm{b}^{\bm{z}}$ and $\bm{b}^{\bm{\phi}}$ can likewise be regarded as i.i.d. Therefore, each bit in $\bm{b}^{\bm{z\phi}}$ can be modeled as an independent Bernoulli random variable, and the total number of ones in $\bm{b}^{\bm{z\phi}}$ follows a Binomial distribution. Let the Bernoulli parameter be denoted by $P_b$. Then, the probability that $n$ bits in $\bm{b}^{\bm{z\phi}}$ are equal to $\bm{1}$ is given by the binomial probability:
		\begin{equation}\label{Eq:Pn}
			\begin{split}
				{P_n}=	\big(\begin{smallmatrix}
					L-1\\
					n
				\end{smallmatrix}\big)P_b^n (1-P_b)^{L-1-n},
			\end{split}
		\end{equation}  
		where $\big(\begin{smallmatrix}
			L-1\\
			n
		\end{smallmatrix}\big)$ is the number of all possible sequences with $n$ bits being $\bm 1$, and $P_b^n (1 - P_b)^{L - 1 - n}$ is the probability of each possible sequence.
		Therefore, Eve's success probability is the sum of all probability of $N$ bit sequences tried by Eve being the actual key as 
		\begin{equation}\label{Eq:Generate}
			\begin{split}
				P_{\text{MDLG}} &= {\sum\nolimits_{n=0}^{n_{\max}}\begin{pmatrix}
						L-1\\
						n
					\end{pmatrix}P_b^n (1-P_b)^{L-1-n}} \\
			\end{split}
		\end{equation}
		According to the definition of the regularized incomplete beta function $I_x(a,b)$ \cite{wallace2010automatic}, we rewrite \eqref{Eq:Generate} as 
		\begin{equation}\label{Eq: MDLGforP}
			P_{\text{MDLG}} 
			= {I_{P_b}(L-1-n_{\max}, n_{\max}+1)}, 	
		\end{equation}
		where 
		\begin{equation}
			\begin{split}
				&\!\!\!\!\!\!{\sum\nolimits_{n=0}^{n_{\max}}\begin{pmatrix}
						L-1\\
						n
					\end{pmatrix}P_b^n (1-P_b)^{L-1-n}} \\
				&=\;\;{I_{P_b}(L-1-n_{\max}, n_{\max}+1)}. 
			\end{split}
		\end{equation}
		Under a given $n_{\max}$, Eve can verify up to $2\sum_{n=0}^{n_{\max}} \binom{L-1}{n}$ key candidates, as one candidate for $\bm{b}^{\bm{\phi}}$ corresponds to two key candidates. 
		Given Eve's capability $N$, then we have $n_{\max}=\max\{n' | 2\sum_{n=0}^{n'}\binom{L-1}{n} \leq N\}$.
		From a statistical perspective, the probability $P_b$ that a bit in $\bm{b}^{\bm{z\phi}}$ equals $\bm{1}$ matches the type of probability discussed in \cite{21qzx-tifs}, that is, the probability that the quantized bits of adjacent subchannel phase responses differ. This so-called transition probability equals $\left(\frac{1 - \rho}{2}\right)$, where $\rho$ is correlation coefficient \cite{lindqvist1978note}. 
		Finally, we can rewrite $\eqref{Eq: MDLGforP}$ to get $\eqref{Eq: MDLGforRho}$. ~\hfill$\Box$
		
		\begin{figure}[t!]
			\centering
			\begin{minipage}{0.485\columnwidth}
				\centering
				\includegraphics[width=\textwidth]{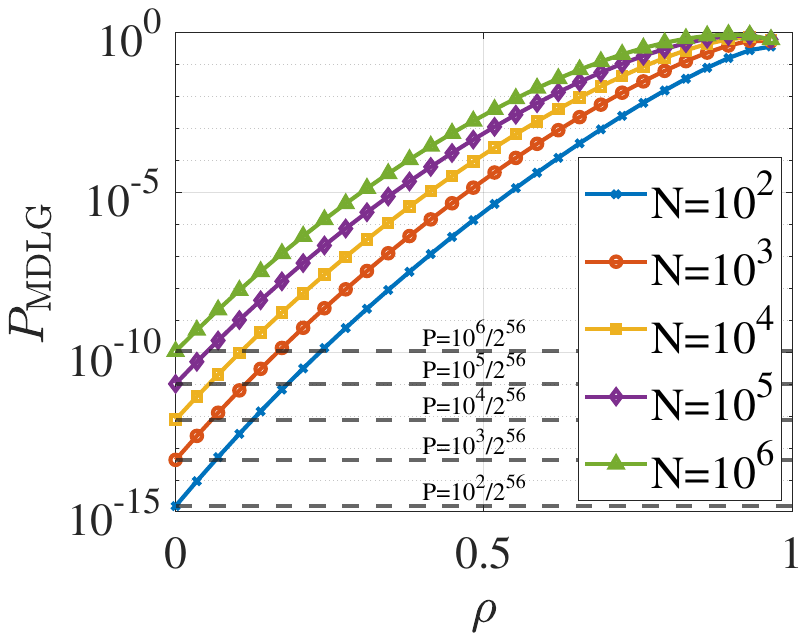}
				\subcaption[first]{Under varying $N$.}
				\label{Fig:MDLGforRho}
			\end{minipage}
			\hfill
			\begin{minipage}{0.5\columnwidth}
				\centering
				\vspace{3pt}
				\raisebox{0.02\height}{\includegraphics[width=\textwidth]{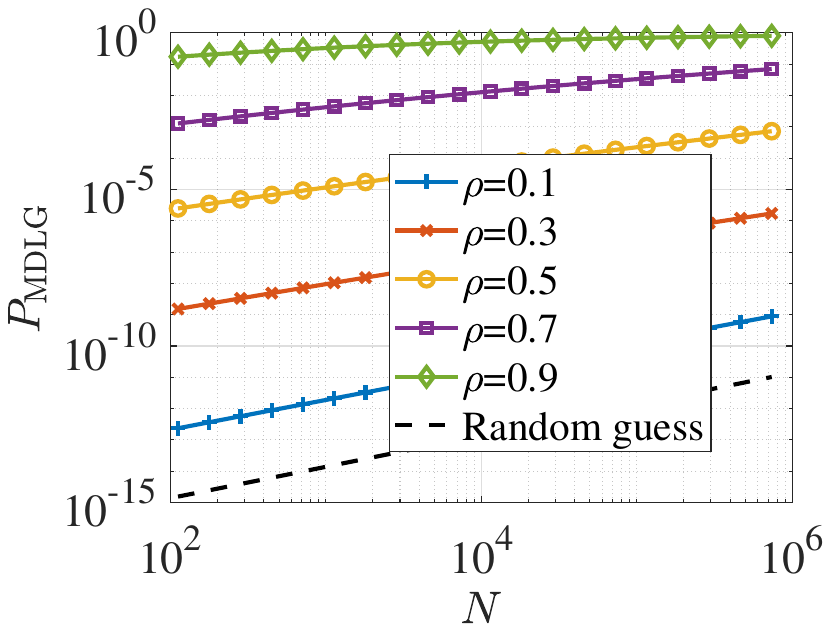}}
				\subcaption[second]{Under varying $\rho$.}
				\label{Fig:MDLGforN}
			\end{minipage}
			\caption{Attack success probabilities of MDLG.}
			\label{Fig:MDLGforPN}
		\end{figure}
		
		We use numerical simulations to illustrate the performance of MDLG. We set the key length and the number of subcarriers to be $S=L=56$. Fig.~\ref{Fig:MDLGforRho} shows Eve's success probability using MDLG as a function of channel correlation coefficient $\rho$ under Eve's capability $N=10^2$, $10^3$, $10^4$, and $10^5$. We can notice that a larger $\rho$ would result in a higher attack success probability. If Eve uses random guessing and guesses the key $N= 10^2$, $10^3$, $10^4$, and $10^5$ times, her success probabilities are $10^2/2^{56}$, $10^3/2^{56}$, $10^4/2^{56}$, $10^5/2^{56}$, and $10^6/2^{56}$, respectively, which are also shown in Fig.~\ref{Fig:MDLGforRho} as dashed lines. We can observe that when $\rho$ is 0 (i.e., the OFDM subcarriers are uncorrelated), MDLG has the same performance as random guessing. However, when the subcarriers become even weakly correlated, MDLG significantly outperforms random guessing. 
		
		Fig.~\ref{Fig:MDLGforN} shows Eve's success probability as a function of Eve's capability $N$. It is clear that a greater $N$ value leads to a higher attack success probability. In practice, there is always a limit for $N$ due to her computational limit. Overall, Fig.~\ref{Fig:MDLGforPN} shows that MDLG is a powerful attack, which leverages the imperfect wireless channel to attack PLA.

		\subsection{Multiple-Bit Mapping Case}
		We next consider the attack scenario under a multiple-bit mapping $M$, where each sub-key contains $m$ bits and the full key $\bm{s}$ of length $S$ bits is divided into $S/m$ sub-keys. In this context, $M$ is a $2^m$-ary mapping function that maps each sub-key to a phase value that is an integer multiple of $\frac{2\pi}{2^m}$.
		
		In this case, Eve's approach in finding the secret key can be extended from the binary mapping scenario. We call the strategy $\mathbf{m}$-MDLG. Specifically, Eve first obtains $\bm \Delta^{\bm z}$ based on $\eqref{Eq:Deltaz}$. Then, she replaces each element in $\bm{\Delta}^{\bm{z}}$ with its nearest multiple of $\frac{2\pi}{2^m}$ to  get a new differential sequence $\bm \Delta^{\bm z'}$, which serves as the reference for generating candidates for $\bm{\Delta}^{\bm{\phi}}$. 
		Since there are $2^m$ possible values for the first element of $\phi$ in \eqref{Eq:ppp} under multiple-bit mapping, based on $\eqref{Eq:Getphi}$ and $\eqref{Eq:ReverseMapping}$, there are $2^m$ key candidates corresponding to one candidate for $\bm{\Delta}^{\bm{\phi}}$ (in contrast to $2$ key candidates under binary mapping). She uses $\bm \Delta^{\bm z'}$ as the first candidate for $\bm{\Delta}^{\bm{\phi}}$ and generates $2^m$ key candidates accordingly for verification. If the verification fails, Eve proceeds to change one element in $\bm{\Delta}^{\bm{z'}}$ to try other key candidates. If all key candidates under one-element change still fail the verification, Eve then moves on to change $2$ and more elements in $\bm{\Delta}^{\bm{z'}}$ until the true key is identified or she reaches her capability $N$. {\hlb Since both $\bm \Delta^{\bm z'}$ and $\bm{\Delta}^{\bm{\phi}}$ are determined by the phase differences between adjacent subchannels of $\bm z$ and $\bm \phi$ and are independent of their absolute phase values, different quantization thresholds for the key mapping function $M(.)$ will not affect the attack performance, just as in the MDLG for the binary mapping case.} In the following, we present the performance of $m$-MDLG. 
		
		\begin{theorem}\label{The:mMDLG}
			Given Eve's capability $N$ and the correlation coefficient among OFDM subchannels $\rho$, the success probability of $m$-MDLG can be written as
			\begin{equation}\label{Eq: mMDLGforRho}
				P_{\text{m-MDLG}} 
				={I_{\left(1-\frac{1 + \rho}{2^{m}}\right)}\left(\frac{S}{m}-1-n_{\max}, n_{\max}+1\right)},
			\end{equation}
		\end{theorem}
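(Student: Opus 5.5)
The argument follows the proof of Theorem~\ref{The:MDLG}, now working over the $2^m$-ary alphabet of phase differences instead of bits. Let $K=S/m$ be the number of sub-keys, which is also the number of subcarriers carrying key phases. Then $\bm\Delta^{\bm\phi}$ and the quantized reference $\bm\Delta^{\bm z'}$ each have $K-1$ entries, and each entry is a multiple of $\frac{2\pi}{2^m}$. Define the error sequence $\bm\Delta^{\bm z'\bm\phi}=\bm\Delta^{\bm\phi}-\bm\Delta^{\bm z'}$, taken modulo $2\pi$. Its $l$-th entry is nonzero exactly when the rounded value of $\Delta^{\bm z}_l$ differs from $\Delta^{\bm\phi}_l$.

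By \eqref{Eq:difference}, $\Delta^{\bm z}_l=\Delta^{\bm\phi}_l-\Delta^{\bm\theta}_l$. Since $\Delta^{\bm\phi}_l$ lies on the grid, rounding commutes with subtracting it, so entry $l$ is an error if and only if $\Delta^{\bm\theta}_l$ falls outside the decision cell $\left(-\frac{\pi}{2^m},\frac{\pi}{2^m}\right]$ around $0$. As in the binary proof, the entries of $\bm\Delta^{\bm\theta}$ are i.i.d. Hence the error indicators are i.i.d. Bernoulli variables with a common parameter $P_e$, and the number of erroneous entries $n$ follows $\mathrm{Bin}(K-1,P_e)$.

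Next comes the counting step. $m$-MDLG enumerates candidates in order of increasing $n$, the number of changed entries. For a fixed $n$ there are $\binom{K-1}{n}(2^m-1)^n$ candidates for $\bm\Delta^{\bm\phi}$, and each yields $2^m$ key candidates, one for each choice of $\phi_0$. So $n_{\max}$ is the largest $n'$ with $2^m\sum_{n=0}^{n'}\binom{K-1}{n}(2^m-1)^n\le N$. Eve succeeds whenever the true error pattern has at most $n_{\max}$ nonzero entries, which gives
\begin{equation*}
P_{\text{m-MDLG}}=\sum_{n=0}^{n_{\max}}\binom{K-1}{n}P_e^{n}(1-P_e)^{K-1-n}.
\end{equation*}
This holds exactly when the search reaches all of level $n_{\max}$; a partial level beyond it is ignored, as in Theorem~\ref{The:MDLG}. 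The binomial-CDF identity with the regularized incomplete beta function \cite{wallace2010automatic} then rewrites the sum as $I_{P_e}(K-1-n_{\max},n_{\max}+1)$.

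The main obstacle is computing $P_e$ as a function of $\rho$. For the target formula we need $P_e=1-\frac{1+\rho}{2^m}$, that is, the probability that $\Delta^{\bm\theta}_l$ stays in the correct cell is $\frac{1+\rho}{2^m}$. A consistency check supports this: at $m=1$ it reduces to $\frac{1+\rho}{2}$, the complement of the transition probability $\frac{1-\rho}{2}$ used in Theorem~\ref{The:MDLG} \cite{lindqvist1978note,21qzx-tifs}. It also behaves correctly at the extremes. At $\rho=0$ the offset is uniform and the probability is $2^{-m}$, matching random guessing, and at $\rho=1$ the trend is toward a correct entry. My plan is to generalize the $m=1$ argument by modeling the quantized phase pair as a $2^m$-state symmetric channel. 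With probability $\rho$ the quantized difference is $0$, and with probability $1-\rho$ it is uniform over the $2^m$ cells. This gives $\rho+\frac{1-\rho}{2^m}$, which coincides with $\frac{1+\rho}{2^m}$ only at $m=1$.

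So, to reproduce the stated expression, I would instead follow the paper's heuristic. The binary ``match'' probability $\frac{1+\rho}{2}$ is spread over the alphabet by replacing $2$ with $2^m$: the match event is treated as the correct cell of width $\frac{2\pi}{2^m}$, carrying probability mass $(1+\rho)$ per uniform unit $2^{-m}$. This is the step I expect to need the most care, or an explicit modeling assumption. Once $P_e=1-\frac{1+\rho}{2^m}$ is accepted, substituting it into the incomplete-beta expression above yields \eqref{Eq: mMDLGforRho}.
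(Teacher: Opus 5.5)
Your outline follows the paper's proof of this theorem step for step. It has the same $\frac{S}{m}-1$ differential entries, the same count of $2^m\binom{\frac{S}{m}-1}{n}(2^m-1)^n$ key candidates at level $n$ defining $n_{\max}$, and the same i.i.d.\ mismatch indicators giving a binomial sum. It also uses the same match probability $\frac{1+\rho}{2}\cdot\frac{1}{2^{m-1}}=\frac{1+\rho}{2^m}$ obtained by rescaling the binary case, and ends with the same incomplete-beta rewrite. Your reason for the i.i.d.\ indicators is more explicit than the paper's: rounding commutes with subtracting the grid value $\Delta^{\bm\phi}_l$, so a mismatch depends only on $\Delta^{\bm\theta}_l$. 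You are also right that the match probability is an assumption, not a derivation; the paper simply asserts the rescaling. Your endpoint check is mistaken, however. At $\rho=1$ one gets $\frac{1+\rho}{2^m}=2^{1-m}<1$ for $m\ge 2$, while in the flat-fading case of Section~\ref{Sec:ideal} every entry matches. Only your symmetric-channel value $\rho+\frac{1-\rho}{2^m}$ is consistent with both endpoints.

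The step that actually fails is the final rewrite. For $X\sim\mathrm{Bin}(M,p)$ one has $\Pr(X\le k)=I_{1-p}(M-k,k+1)$, whereas $I_{p}(M-k,k+1)=\Pr(X\ge M-k)$. Already for $M=1$, $k=0$ the sum is $1-p$ but $I_p(1,1)=p$. Hence your sum equals $I_{\frac{1+\rho}{2^m}}\bigl(\frac{S}{m}-1-n_{\max},n_{\max}+1\bigr)$, not the displayed $I_{1-\frac{1+\rho}{2^m}}(\cdot)$. The displayed expression is the probability of at least $\frac{S}{m}-1-n_{\max}$ mismatches, and it decreases as $\rho$ grows. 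The paper's proof performs exactly the same swap, and so does Theorem~\ref{The:MDLG}, whose closed form gives $0$ at $\rho=1$ although Eve then succeeds with probability $1$. So you have missed no idea the paper uses, but neither argument yields the expression as displayed. What the argument really establishes, under the assumed match probability, is the binomial sum. That is the regularized incomplete beta function evaluated at $\frac{1+\rho}{2^m}$ rather than at $1-\frac{1+\rho}{2^m}$.
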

		where $n_{\max}=\max\{ n' | 2^m\sum_{n=0}^{n'}\binom{\frac{S}{m}-1}{n}{(2^m-1)}^n \leq N \}$, $I_x(a,b) = \frac{B(x;a,b)}{B(1;a,b)}$ represents the regularized incomplete beta function with incomplete beta function $B(x;a,b) = \int_0^x t^{a-1}(1-t)^{b-1} \, dt$ and complete beta function $B(a,b) = B(1;a,b)$.
		
		\noindent\textit{Proof:}
		Under the $m$-bit mapping $M$, the length of $\bm{\Delta}^{\bm{z'}}$ or $\bm{\Delta}^{\bm{\phi}}$ is $\frac{S}{m} - 1$, where $S$ denotes the length of the key $\bm{s}$.
		Based on this attack, Eve incrementally changes one or more elements in $\bm \Delta^{\bm z'}$ to generate candidates for $\bm{\Delta}^{\bm{\phi}}$ and subsequently verifies the corresponding key candidates.
		When Eve decides to change $n$ elements in $\bm{\Delta}^{\bm{z'}}$, there are $\binom{\frac{S}{m} - 1}{n}$ possible combinations of element positions. 
		Since each element selected for changing has $(2^m-1)$ phase options, the total number of candidates for $\bm \Delta^{\bm \phi}$ is $\binom{\frac{S}{m} - 1}{n}{(2^m-1)}^n$. Moreover, since each candidate for $\bm{\Delta}^{\bm{\phi}}$ corresponds to $2^m$ key candidates, the total number of key candidates for a given $n$ is $2^m \binom{\frac{S}{m} - 1}{n} (2^m - 1)^n$.
		Let $n_{\max}$ denote the maximum value that $n$ can reach under Eve's capability $N$, then we have $n_{\max}=\max\{ n' | 2^m\sum_{n=0}^{n'}\binom{\frac{S}{m}-1}{n}{(2^m-1)}^n \leq N \}$. If the number of differing elements between $\bm{\Delta}^{\bm{z'}}$ and $\bm{\Delta}^{\bm{\phi}}$ does not exceed $n_{\max}$,  Eve can succeed in finding the true key $\bm s$.
		Similar to the $1$-bit mapping case, each corresponding pair of elements in $\bm{\Delta}^{\bm{z'}}$ and $\bm{\Delta}^{\bm{\phi}}$ differs independently with identical probability. Let this probability be $P_{bm}$. As a result, the probability that $\bm{\Delta}^{\bm{\phi}}$ and $\bm{\Delta}^{\bm{z'}}$ differ in $n$ elements is ${\binom{\frac{S}{m}-1}{n}}{\left(1-P_{bm}\right)^{\frac{S}{m}-n-1}}{\left(P_{bm}\right)}^{n}$. 
		Therefore, Eve's success probability is 
		\begin{equation}\label{Eq:mMDLGPrimary}
			P_{\text{m-MDLG}} 
			= \sum_{n=0}^{n_{\max}}\binom{\frac{S}{m}-1}{n}{\left(1-P_{bm}\right)^{\frac{S}{m}-n-1}}{\left(P_{bm}\right)}^{n}.
		\end{equation}
		As the value of $m$ increases, the key-phase mapping in \eqref{Eq:MappingM} becomes more fine-grained (i.e., with smaller phase intervals in $\bm{\Delta}^{\bm{z'}}$ and $\bm{\Delta}^{\bm{\phi}}$), making it more likely that the corresponding elements of $\bm{\Delta}^{\bm{z'}}$ and $\bm{\Delta}^{\bm{\phi}}$ are different and increasing the probability $P_{bm}$. When $m = 1$, the probability that the corresponding elements of $\bm{\Delta}^{\bm{z'}}$ and $\bm{\Delta}^{\bm{\phi}}$ are equal is $1-\left(\frac{1- \rho}{2}\right)=\left(\frac{1 + \rho}{2}\right)$. Based on this, for an arbitrary value of $m$, the probability that a pair of corresponding elements are equal is given by $\left(\frac{1 + \rho}{2}\right)\frac{1}{2^{m-1}} = \frac{1 + \rho}{2^m}$, and thus $P_{bm} = 1 - \frac{1 + \rho}{2^m}$. Finally, \eqref{Eq:mMDLGPrimary} can be rewritten as 
		\begin{equation}
			P_{\text{m-MDLG}} 
			={I_{\left(1-\frac{1 + \rho}{2^{m}}\right)}\left(\frac{S}{m}-1-n_{\max}, n_{\max}+1\right)}.
		\end{equation}
		~\hfill$\Box$

		We can easily verify that Theorem~\ref{The:MDLG} is a special case of Theorem~\ref{The:mMDLG} by setting $m=1$ (and thus $L=S$). Fig.~\ref{Fig:Multi} shows the impacts of $m$ and $\rho$ on the success probability of $m$-MDLG using numerical simulations ($S= 64$ and $N=10,000$). It is observed from the figure that the success probability increases as $\bm m$ increases for a fixed value of $\rho$, indicating that the binary mapping (i.e., $m=1$) is able to minimize the attack success probability and should be considered as a primary mapping scheme for Alice and Bob.

		\begin{figure}[t!]
			\centering
			\includegraphics[width=0.56\linewidth]{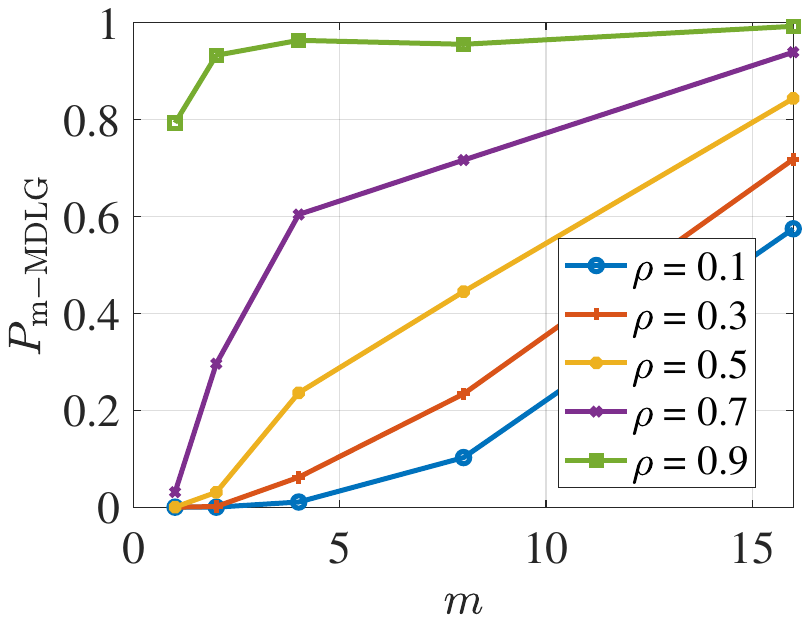}
			\caption{Attack success probabilities of $m$-MDLG.}
			\label{Fig:Multi}
		\end{figure}
		
		\section{Measurable Guideline for Security}\label{Sec:Defense Method}
		In this section, we will create a measurable guideline to determine when PLA should be used between Alice and Bob in the presence of Eve with MDLG. 
		\subsection{Design Intuition}
		From Theorem~\ref{The:MDLG}, we know that Eve can use MDLG to take advantage of imperfect channel randomness to have an effective attack to eavesdrop Alice and Bob's key during their PLA process. Thus, when the channel is not sufficiently random, Alice and Bob should not use PLA and switch back to a traditional cryptography-based authentication method. 
		
		How can Alice and Bob decide whether the channel is random enough to support their security requirement? Randomness testing, such as the NIST randomness test suite \cite{nisttest}, is a commonly used tool to test the randomness of a binary sequence with a well-studied procedure and performance even in wireless context \cite{21qzx-tifs}. If we quantify the channel response $\bm h$ in \eqref{Eq:AliBobChannel} under the mapping function $M$ into a binary sequence, we can in fact use randomness testing to decide whether the channel is random or not. In particular, a randomness test will compute a statistic, called P-value, and compare it with a threshold $\alpha$ to determine whether the sequence is random (i.e., P-value $ > \alpha$) or not (i.e., P-value $ \leq \alpha$). Changing the value of $\alpha$ will lead to a stricter or looser test \cite{21qzx-tifs}.

		\subsection{Inserting Testing into Authentication Process} 
		The randomness testing needs to be inserted into the PLA process in Fig.~\ref{Fig:basickeybased}. It involves four steps: 1) Alice sends Bob a challenge signal $S_A$ to initialize the authentication; 2) $S_A$ goes through the channel between Alice and Bob and is received by Bob as $R_B$. Meanwhile, Bob can estimate the channel response $\bm h$ in \eqref{Eq:AliBobChannel}; 3) Bob quantifies $\bm h$ into a binary sequence and conducts randomness testing on it. 4) Based on the testing result, Bob can decide whether he will continue to send his physical-layer response signal or a traditional cryptography-based signal.

		\subsection{Maximizing Efficiency while Ensuring Security}
		On one hand, increasing the threshold $\alpha$ will lead to a stricter test. This means that the test will reject more sequences as random and the authentication will switch to traditional ones, making PLA less useful. On the other hand, reducing $\alpha$ leads to a looser test, which can make PLA vulnerable under Eve's MDLG strategy. As a result, we need to create a guideline to maximize the efficiency of using PLA, while ensuring security. 
		
		We first look at how Eve's MDLG can be considered as a successful attack. There are two requirements: 1) Alice and Bob must pass the randomness testing to decide to use PLA and 2) Eve's MDLG successfully finds their key. As a result, Eve's success probability is written as  
		\begin{equation}
			\begin{split}
				P(\text{Eve succeeds}) 
				= P(T~\text{Accept}~H_0 ) P_{\text{MDLG}}, \\
			\end{split}
		\end{equation}
		where $P(T~\text{Accept}~H_0 )$ denotes the probability that randomness testing accepts the event $H_0$ that the channel is regarded as random, {\hlb and $P_{\text{MDLG}}$ denotes the conditional success probability given that the random test is passed, i.e., $P(\text{MDLG succeeds} \mid T~\text{ Accept }~H_0)$. Since the calculation logic of the MDLG attack is invariant to the outcome of the randomness test, the formulation of $P_{\text{MDLG}}$ here remains identical to the expression in \eqref{Eq: MDLGforRho}.}
		
		To ensure the security, we define a requirement
		\begin{equation}\label{Eq:BenchMark}
			P(\text{Eve succeeds}) \leq P_{\text{Benchmark}},
		\end{equation}
		where $P_\text{Benchmark}$ is the maximum acceptable attack success probability in the PLA. 
		
		The randomness testing controls when we should use PLA. In order to maximize its use efficiency, we need the channel to pass randomness testing as many times as possible. At the same time, the test should also be strict enough to make sure the security requirement \eqref{Eq:BenchMark} is met, which leads to an optimization problem of finding the optimal test threshold $\alpha$ as 
		\begin{align} \label{Eq:FinalGuideline}
			&\argmax\nolimits_{\alpha}~~~~~P(T~\text{Accept}~H_0 )\\   
			&\text{s.t.}~~~~~~~	P(\text{Eve succeeds}) \leq P_{\text{Benchmark}}.
		\end{align}
		
		Given the security requirement $P_\text{benchmark}$, we can use the guideline in \eqref{Eq:FinalGuideline} to set up the randomness testing to maximize the efficiency in using PLA. {\hlb As both $P(T~\text{Accept}~H_0 )$ and $P_{\text{MDLG}}$ are independent of the quantization thresholds, the proposed guideline remains effective regardless of the specific threshold used for the key mapping function.}
		{\hlb\subsection{Generalizability of the Proposed Guideline}
			\subsubsection{Extension to Diverse Wireless Architectures}
			While the proposed guideline is demonstrated within an OFDM framework, it can be effectively generalized to scenarios beyond OFDM. This is because it fundamentally addresses a universal vulnerability in physical layer security: the inherent physical continuity and statistical correlation between adjacent physical resources (e.g., spatial correlation in MIMO antennas or temporal correlation in fast-fading channels). In fact, these correlation-induced vulnerabilities in non-OFDM systems can still be exploited by attackers utilizing MDLG-like strategies to execute attacks.
			
			Specifically, the differential between adjacent wireless physical parameters corresponds to a differential sequence (e.g., $\bm{\Delta}^{\bm{\theta}}$ in this paper). Dictated by the underlying correlation, each element in this sequence is probabilistically more likely to be close to zero. Consequently, the attacker can generate candidates for this differential sequence in descending order of likelihood, where sequences deviating more from an all-zero state possess correspondingly lower likelihoods. With these initial candidates, the attacker can perform further scenario-specific adaptations to derive subsequent target candidates (e.g., in this paper, using $\bm{b}^{\bm{z}}$ as a reference to derive $\bm{b}^{\bm{\phi}}$, or using $\bm{\Delta}^{\bm{z'}}$ to obtain candidates for $\bm{\Delta}^{\bm{\phi}}$). These derived candidates inherently preserve the descending order of likelihood, enabling the attacker to achieve a success rate significantly higher than random guessing. Therefore, strictly applying our proposed randomness testing guideline is imperative across various wireless architectures to fundamentally restrict this widespread vulnerability.
			\subsubsection{Applicability to Dynamic and Mobile Scenarios}
			Our guideline remains applicable in dynamic and mobile scenarios. From a mechanistic perspective, neither MDLG nor the proposed guideline depends on the restrictive assumption that the nodes remain stationary throughout. 
			
			Specifically, MDLG exploits the correlation in the current Alice-Bob channel phase response associated with the moment when Alice initiates the challenge in a given authentication round, while the randomness testing performed by Bob under our guideline examines the randomness of that same channel realization. In other words, in dynamic scenarios, node mobility mainly causes the channel realizations corresponding to the moments when Alice initiates the challenge to vary across authentication rounds, but it does not change the basic logic that MDLG and the guideline focus on the same object, namely, the channel condition at the moment when Alice initiates the challenge in the current authentication round. Moreover, in most practical mobility scenarios, the duration of a single authentication procedure is still typically shorter than the channel coherence time. As a result, the channel response between Alice and Bob during that authentication round can usually be approximated as unchanged. From a system-level perspective, this further supports the applicability of the proposed guideline in dynamic scenarios.}
		
		\section{Experimental Study}\label{Sec:experiment}
		In this section, we use experimental evaluations to demonstrate the attack performance of MDLG and the effectiveness of randomness testing based design guideline. We first introduce the experimental setups and then analyze the results.
		
		\subsection{Experimental Setups}
		We conduct experiments in a realistic indoor environment, as shown in Fig.~\ref{Fig:Setup}. We fix Alice at Location 0, and place Bob at locations 1-2, 3-4, 5-6, or 7-8, which represent the short-distance line of sight (S-LoS), short-distance Non-LoS (S-NLoS), long-distance LoS (L-LoS), and long-distance NLoS (L-NoS) channel conditions, respectively. We place Eve at Location 5. {\hlb Notably, the location of Eve will not affect the MDLG attack performance or the effectiveness of the proposed guideline. This is because the attack success probability is fundamentally determined by the correlation properties of the OFDM subchannel responses between the legitimate users (i.e., Alice and Bob), and the randomness testing in the guideline is conducted exclusively on these Alice-Bob channel responses.} We use commodity WiFi devices based on Atheros AR5822/AR9580 WiFi chipsets and TP-Link WDR4300 AP. On this experimental platform, we collect over 500,000 OFDM channel responses at each location with a carrier frequency of 2.4GHz, 56 subcarriers and 20MHz bandwidth by default. The software toolkit is the Atheros CSI tool \cite{xie2018precise}.
		
		\begin{figure}[t!]
			\centering
			\includegraphics[width=0.75\linewidth]{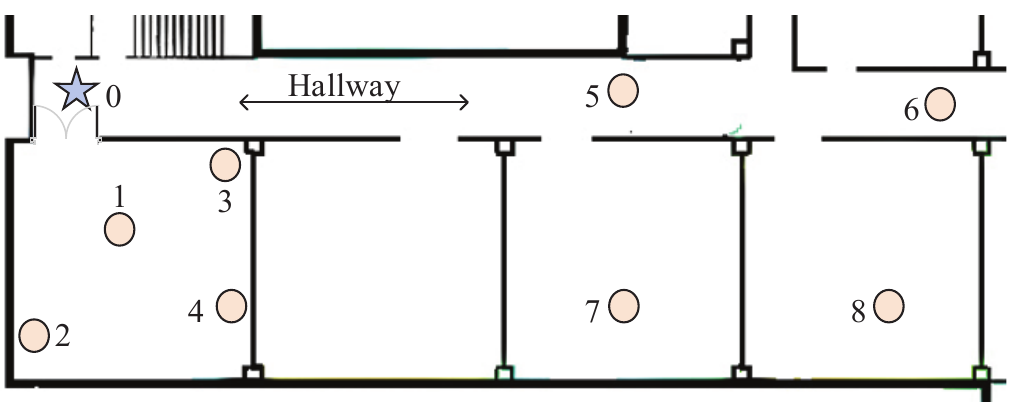}
			\caption{Experimental environment.}
			\label{Fig:Setup}
		\end{figure}

		\begin{figure*}[t!]
			\begin{minipage}{0.49\columnwidth}
				\vspace{-5pt}
				\includegraphics[width=1.05\textwidth]{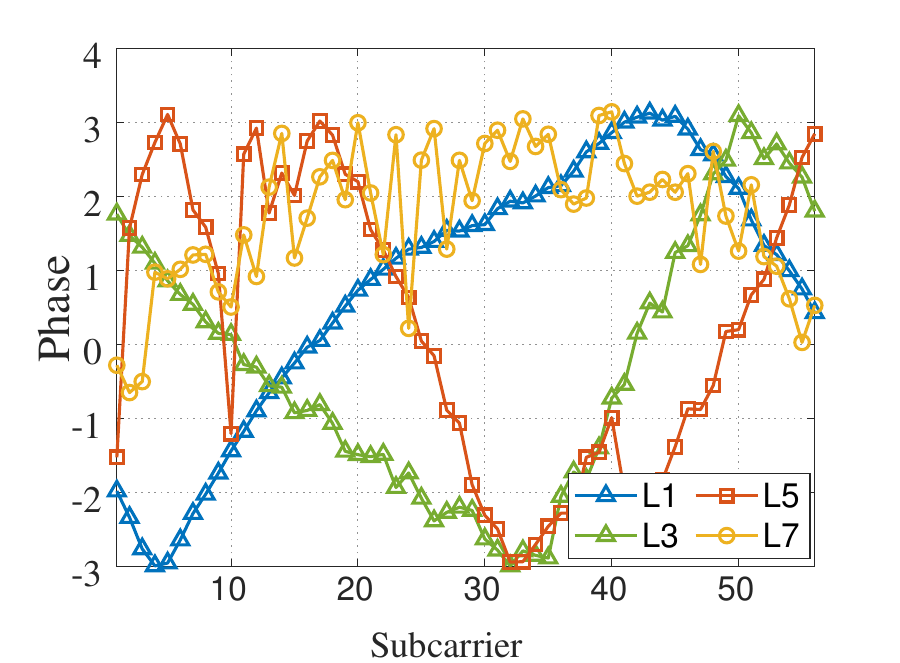}
				\caption{Channel phase responses over subchannels.}
				\label{Fig:PhaseExp}
			\end{minipage}
			\hfill
			\begin{minipage}{0.49\columnwidth}
				\centering
				\includegraphics[width=0.98\textwidth]{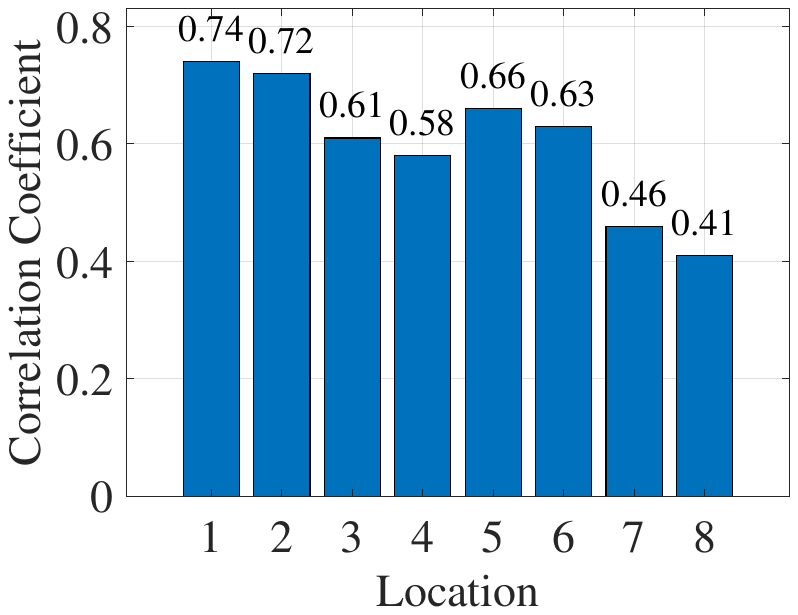}
				\caption{Correlation coefficients at different locations.}
				\label{Fig:rho_loc1-8}
			\end{minipage}
			\hfill
			\begin{minipage}{0.49\columnwidth}
				\centering
				\vspace{1pt}
				\includegraphics[width=\textwidth]{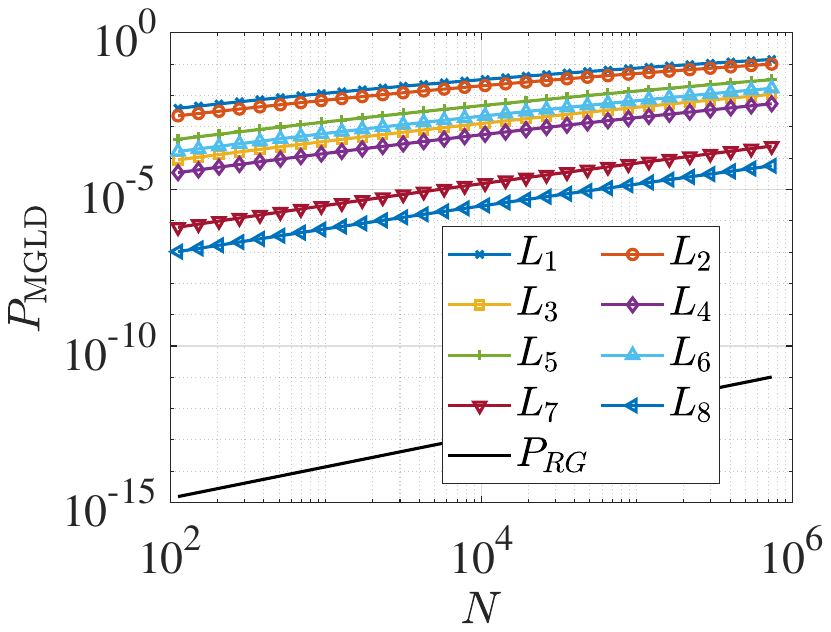}
				\caption{$P_{\text{MDLG}}$ vs $P_{RG}$ for different Eve's capabilities $N$. }
				\label{Fig:P_MDLG-and-Randomguess}
			\end{minipage}
			\hfill
			\begin{minipage}{0.49\columnwidth}
				\centering
				\vspace{0pt}
				\includegraphics[width=\textwidth]{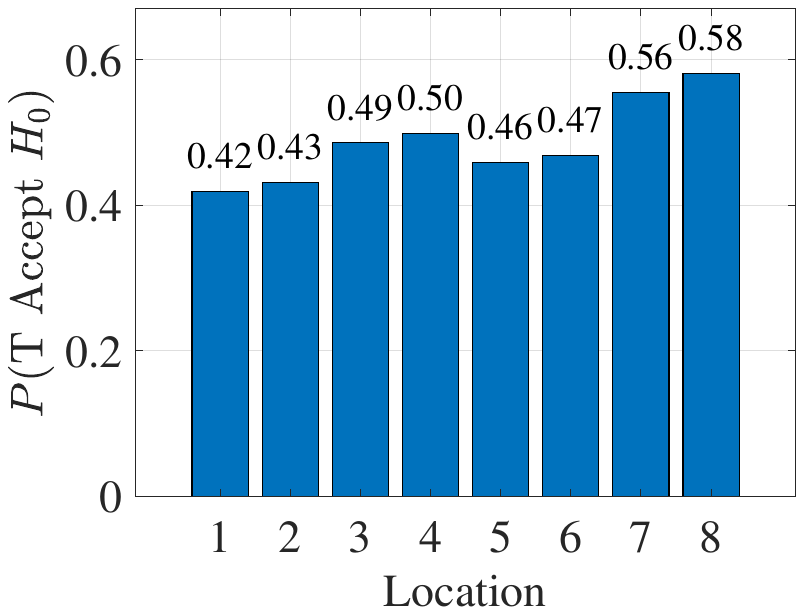}
				\caption{The test-passing probabilities with $\alpha=0.20$.}
				\label{Fig:experiment_figure_1_1}
			\end{minipage}
			\hfill
			\begin{minipage}{0.49\columnwidth}
				\centering
				\includegraphics[width=\textwidth]{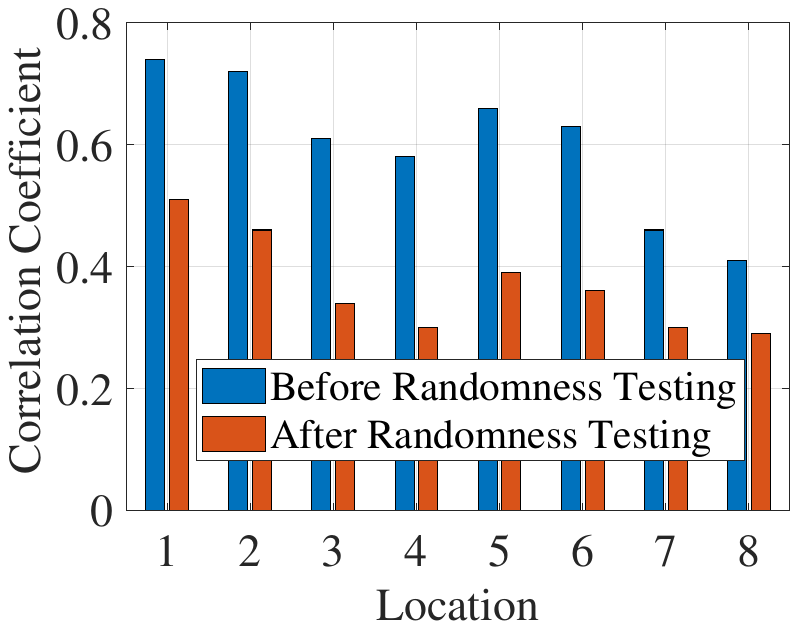}
				\caption{Correlation coefficients before and after the test $T$ ($\alpha=0.20$). }
				\label{Fig:experiment_figure_1_2}
			\end{minipage}
			\hfill
			\begin{minipage}{0.49\columnwidth}
				\centering
				\vspace{0pt}
				\includegraphics[width=\textwidth]{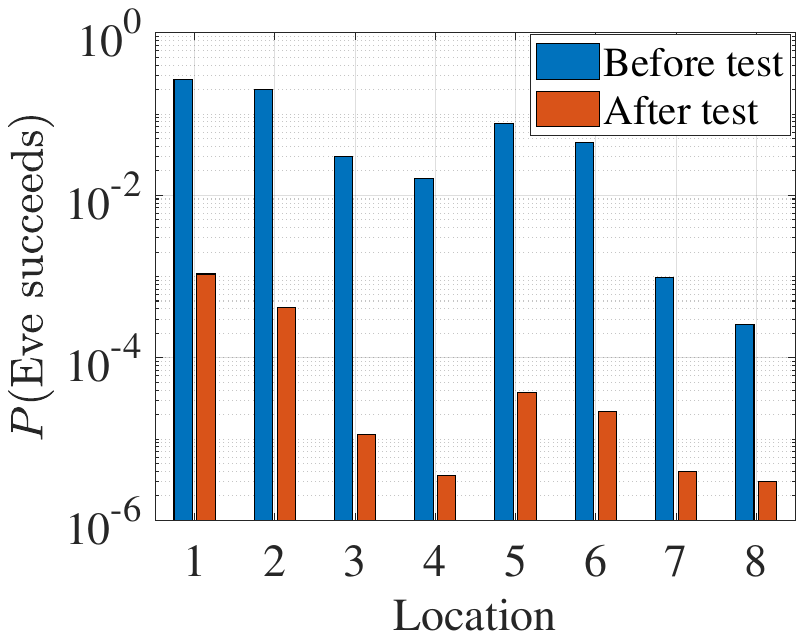}
				\caption{P(\text{Eve succeeds}) before and after the test $T$ ($\alpha=0.20$). }
				\label{Fig:experiment_figure_1_3}
			\end{minipage}
			\hfill
			\begin{minipage}{0.51\columnwidth}
				\centering
				\vspace{1pt}
				\includegraphics[width=0.98\textwidth]{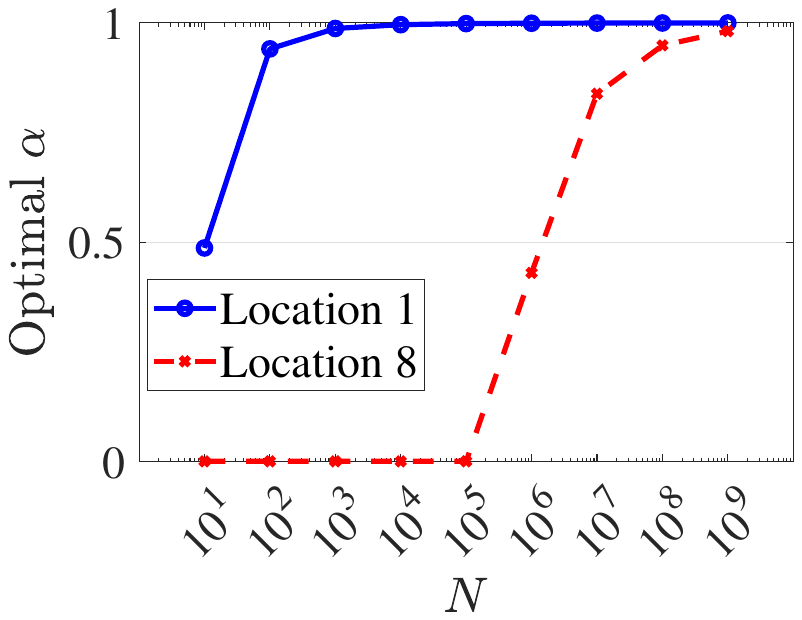}	
				\caption{The optimal $\alpha$ values under different Eve's capabilities $N$ at Locations 1 vs 8.}
				\label{Fig:experiment_figure_2_1}
			\end{minipage}
			\hfill
			\begin{minipage}{0.51\columnwidth}
				\centering
				\vspace{3pt}
				\includegraphics[width=0.98\textwidth]{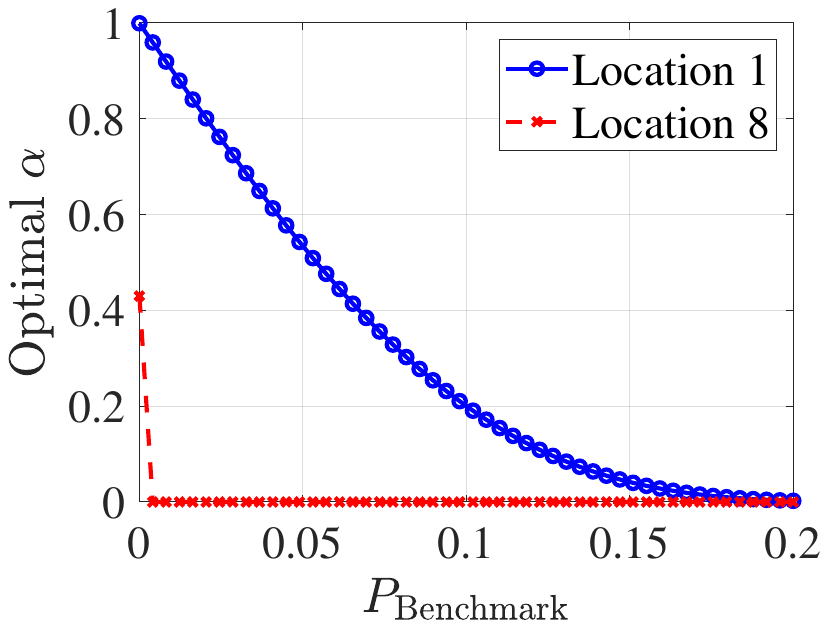}		
				\caption{The optimal $\alpha$ values under different $P_{\text{Benchmark}}$ requirements.}
				\label{Fig:experiment_figure_2_2}
			\end{minipage}
		\end{figure*}
		
		We use the frequency test in the NIST test suite \cite{nisttest} as our typical test in experiments. Alternative tests in the suite can also be used, and they are shown to yield similar testing performance \cite{jana2009effectiveness, wallace2010automatic, mathur2008radio, 21qzx-tifs}. An alternative test leads to a different mathematical expression of $P(T~\text{Accept}~H_0)$, making the optimal threshold $\alpha$ slightly different \cite{21qzx-tifs}. 
		
		The evaluation metrics in our experiments are Eve's success probability $P(\text{Eve succeeds})$ and the test-passing probability $P(T~\text{Accept}~H_0)$ (i.e., the probability that the OFDM channel passes the randomness testing), which are used to measure security and efficiency, respectively.

		\subsection{Results Analysis}
		In the following, we will thoroughly analyze the results obtained from various experiments.
		
		{\it 1) Phase Responses of OFDM Subchannels:} In this experiment, we place Bob at 8 different locations. Fig.~\ref{Fig:PhaseExp} shows an example of the channel phase responses on all OFDM subchannels. Because the neighboring locations (i.e., Locations 1-2, Locations 3-4, Locations 5-6, and Locations 7-8) have similar phase responses. We only plot the phase responses at Locations 1, 3, 5, and 7 in Fig.~\ref{Fig:PhaseExp}. The results show that the phase responses slowly vary over subchannels, confirming that the channel correlation is widespread in practice.
		
		{\it 2) Correlation Coefficient at Different Locations:} We compute the channel correlation coefficient $\rho$ for each location (i.e., placing Bob at each location and measuring Alice and Bob's channel). As shown in Fig.~\ref{Fig:rho_loc1-8}, the correlation coefficients for 8 locations are 0.74, 0.72, 0.61, 0.58, 0.66, 0.63, 0.46, and 0.41, showing that the OFDM subchannels are moderately or strongly correlated. In particular, the channel responses under the S-LoS environment show the highest correlation (e.g., $\rho \approx 0.7$ at Locations 1-2); and the L-NLoS (more scattered and multipath-rich) environment exhibits the lowest (but still moderate) correlation (e.g., $\rho \approx 0.4$ at Locations 7-8).
		
		{\it 3) Success Probability of MDLG:} In Fig.~\ref{Fig:P_MDLG-and-Randomguess}, we compute MDLG's success probability $P_\text{MDLG}$ as a function of Eve's capability $N$ at 8 different locations for Bob. The figure shows that MDLG has an order-of-magnitude advantage over random guessing to crack Alice and Bob's key during the PLA process (i.e., $P_\text{MDLG} \gg P_{RG}$ that denotes the success probability of random guessing). For example, at Locations 7 and 8, the MDLG's success probability is about $10^8$ times higher than random guessing. When the correlation among OFDM subchannels is even higher (at Locations 1 and 2), MDLG is about $10^{12}$ times better than random guessing.
		
		{\it 4) Impact of Randomness Testing:} In this experiment, we let $N$ be $10^6$ and conducted the randomness test $T$ at 8 locations and analyzed the impact of the test with a typical test threshold $\alpha=0.20$. In Fig.~\ref{Fig:experiment_figure_1_1}, we measure the test-passing probabilities range from $0.42$ to $0.58$ at Locations 1-8 with Locations 1 and 8 having the lowest and highest passing probability, respectively. We can see that with $\alpha=0.20$, the randomness test $T$ has around 50\% probability of rejecting the use of the wireless PLA.
		
		We also compute the average correlation coefficients of channel responses that actually pass the test $T$ at 8 locations (ranging from 0.30 to 0.51). We show the results in Fig.~\ref{Fig:experiment_figure_1_2} in comparison with the coefficients of all channel responses before the test $T$ (ranging from 0.41 to 0.76). It is noted from Fig.~\ref{Fig:experiment_figure_1_2} that correlation coefficients of the channel responses accepted by $T$ are smaller than those before $T$, indicating the test $T$ can help filter out the defective channel cases to combat the MDLG attack.
		
		Fig.~\ref{Fig:experiment_figure_1_3} plots Eve's success probabilities at 8 locations when Eve uses MDLG to attack the PLA before and after the deployment of test $T$. It is clear from the figure that Eve has a much lower success probability if the wireless channel is first tested by $T$ before it is used for PLA. For example, at Location~8, $P(\text{Eve succeeds}) = 2.561\times 10^{-4}$ without $T$ and it becomes $2.986\times10^{-6}$ when $T$ is used.

		{\it 5) Optimal $\alpha$ under Different Conditions:} Figs.~\ref{Fig:experiment_figure_1_1}, \ref{Fig:experiment_figure_1_2}, and Fig.~\ref{Fig:experiment_figure_1_3} show that randomness testing chosen with a typical value $\alpha=0.20$ can substantially reduce Eve's success. As designed in our guideline in \eqref{Eq:FinalGuideline}, the $\alpha$ value can be optimized to maximize the  efficiency of PLA while maintaining security. 
		
		To see how our guideline \eqref{Eq:FinalGuideline} determines the optimal $\alpha$, we consider two locations, Locations 1 and 8, which represent the S-LoS (i.e., strongly correlated) and L-LoS (i.e., weakly correlated) environments, respectively. We let the maximum allowable attack success probability $P_{\text{Benchmark}}$ be $10^{-4}$. Based on our guideline \eqref{Eq:FinalGuideline}, we compute the optimal $\alpha$ values at two locations as a function of Eve's capability $N$, shown in Fig.~\ref{Fig:experiment_figure_2_1}. The figure demonstrates that Location~1 needs a larger $\alpha$ value than Location~8 in the test $T$ to ensure security because Location 1 leads to a much stronger OFDM subchannel correlation than Location 8. We also notice from the figure that as Eve becomes more capable with a larger $N$ value, the optimal value $\alpha$ increases to 1. This means that the randomness test $T$ will consider more channel cases defective (i.e., not meeting the benchmark security requirement in \eqref{Eq:FinalGuideline}) to be used for PLA when facing a more powerful attack. 
		
		Fig.~\ref{Fig:experiment_figure_2_2} also shows the optimal $\alpha$ value as a function of the benchmark requirement $P_{\text{Benchmark}}$ under Eve's capability $N = 10^6$. Fig.~\ref{Fig:experiment_figure_2_2} illustrates that as $P_{\text{Benchmark}}$ increases, the optimal $\alpha$ value decreases (even to 0). This is because a higher $P_{\text{Benchmark}}$ reduces the security requirement (i.e., Alice and Bob become more tolerant of an attack), allowing for a lower testing threshold to enhance the efficiency of using PLA.

		\section{Related Work}\label{Sec:Relatework}
		In this section, we present the research related to our work. 
		
		\vspace{0.01cm}
		{\noindent\bf Improving PLA security:} PLA performance heavily relies on the random wireless channel. Some studies focused on making the PLA procedure appear more random to the adversary \cite{wu2016artificial, Poor2017security}. In \cite{wu2016artificial}, Tikhonov-distributed artificial noise was introduced to interfere with the key-related signal for resisting potential key recovery attacks. This method not only affects the reception of the eavesdropper but also that of the intended user as signals are both degraded. Due to the presence of noise, the decryption process of the secret key will become more complex. In \cite{Poor2017security}, instead of using all available resources for message transmission, a certain part of them were used for randomization by adding ``dummy'' messages unknown to Eve such that Eve would be saturated with useless information. These studies focused on empirically improving PLA schemes but lacked a scientific guideline to assess whether the enhanced system meets security standards. Our work offers a well-defined guideline via detailed mathematical modeling to determine when PLA is sufficiently secure.
		
		\vspace{0.01cm}
		{\noindent\bf Using randomness testing in wireless security:} Randomness testing has been proposed to test the key sequence used for authentication \cite{21qzx-tifs, Upadhyay2018randomness, mathur2008radio, jana2009effectiveness,soto2000,Manucom2017,hring2015}. Although they share the same objective, the methods for generating key sequences and the randomness tests used can differ. For example, \cite{Upadhyay2018randomness} establishes the key sequence for Alice and Bob by extracting and quantifying their Received Signal Strength Indicator (RSSI). The work in \cite{hring2015} proposed a reduced set of statistical tests that have low complexity and therefore can be executed at the runtime even on severely resource-constrained devices. The work in \cite{21qzx-tifs} investigated how to configure randomness testing to meet the security strength for creating a key from the wireless channel. Compared to these studies, our work focuses on the authentication process rather than the key establishment process. Specifically, our design uses randomness testing to determine when PLA is secure, rather than just establishing a secret key or evaluating its strength.

		\vspace{0.01cm}
		{\noindent\bf Using other physical layer features for security:} Existing studies have explored authentication mechanisms using other physical layer features. For example, \cite{hou2014physical, hua2018accurate} utilized the combination of device-dependent biases in radio frequency and the mobility-induced Doppler shift, characterized as a time-varying carrier frequency offset, as a radiometric signature for authentication. The work in \cite{shen2022towards} leveraged the deep metric learning to train a radio frequency fingerprint identification by exploiting channel randomness features and data augmentation for authentication on low-power long-range devices. These studies may be hardware-specific and usually require a training process by collecting labeled data for machine learning, which can be cumbersome to perform and adjust over time in practice. Our work is complementary to these studies and offers a measurable guideline to determine when the wireless channel is sufficient to support PLA. 
		
		\section{Conclusion}\label{Sec:Conclusion}
		In this paper, we explored the potential vulnerability of PLA under practical OFDM channel conditions. Based on the correlations among subchannels, we introduce a novel adversary model named MDLG, which helps an eavesdropper to effectively infer Alice and Bob’s secret key from its accessible information. To defend against such a critical attack, we created a measurable guideline that uses randomness testing to first test whether the wireless channel can indeed support the PLA process under the MDLG attack model given a security requirement. We formulated the guideline as an optimization problem to solve the optimal setup for randomness testing. We also conducted comprehensive real-world experiments to show that the guideline can efficiently protect the PLA against the proposed MDLG attack.

	\bibliographystyle{IEEEtran}
	\bibliography{WhenUseAuth}

	\begin{IEEEbiography}[{\includegraphics[width=1in,height=1.5in,clip,keepaspectratio]{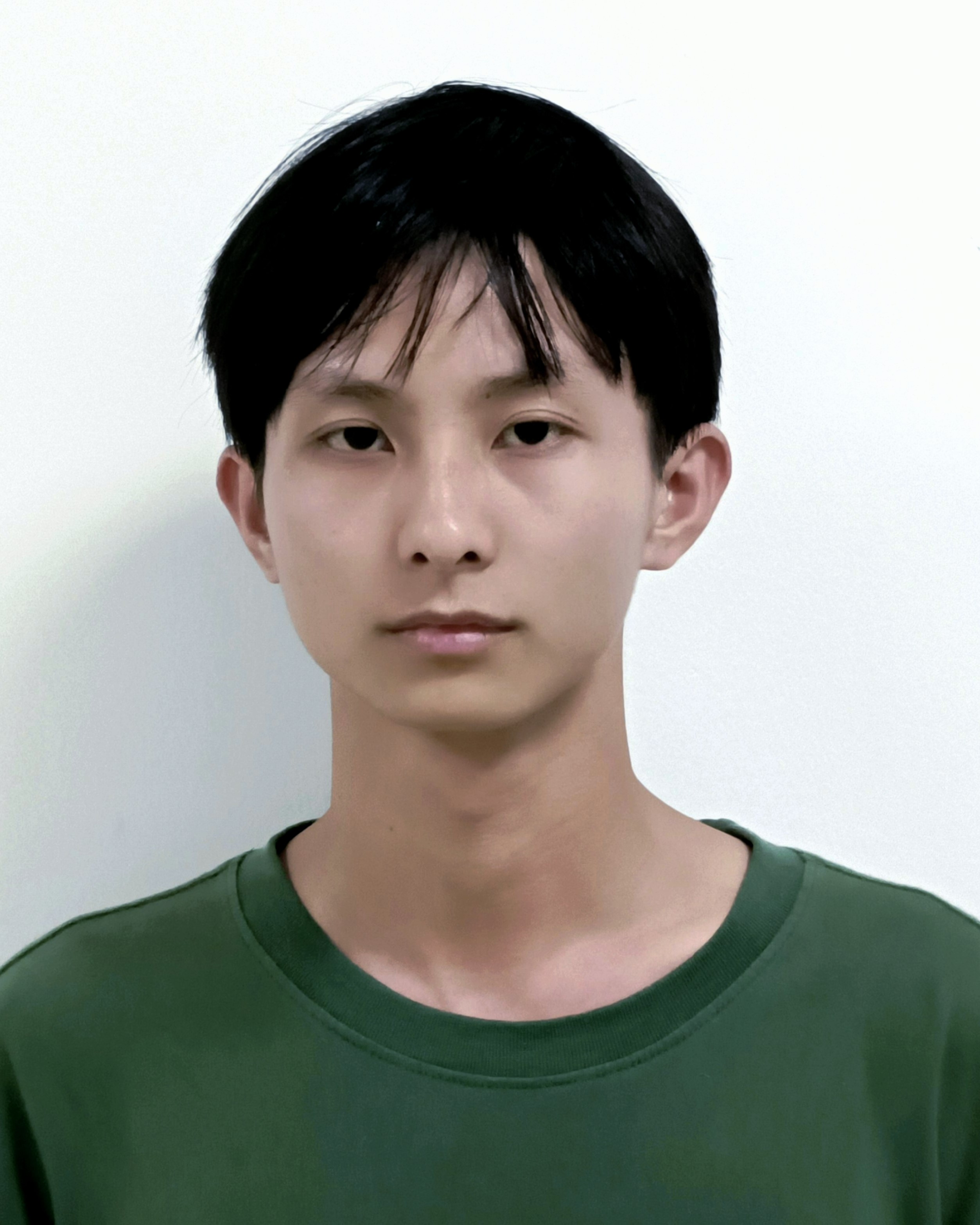}}]{Haiyun Liu} (Student Member, IEEE) received the B.S. degree in electrical engineering and automation from Shanghai University, Shanghai, China, in 2019, and the M.S. degree in signal and information processing from Sichuan University, Chengdu, China, in 2023. He is currently a Ph.D. student in the Bellini College of Artificial Intelligence, Cybersecurity and Computing, University of South Florida, Tampa, FL, USA. His research interests include security and privacy in wireless communications, and federated learning for networks.
	\end{IEEEbiography}
	\begin{IEEEbiography}[{\includegraphics[width=1in,height=1.8in,clip,keepaspectratio]{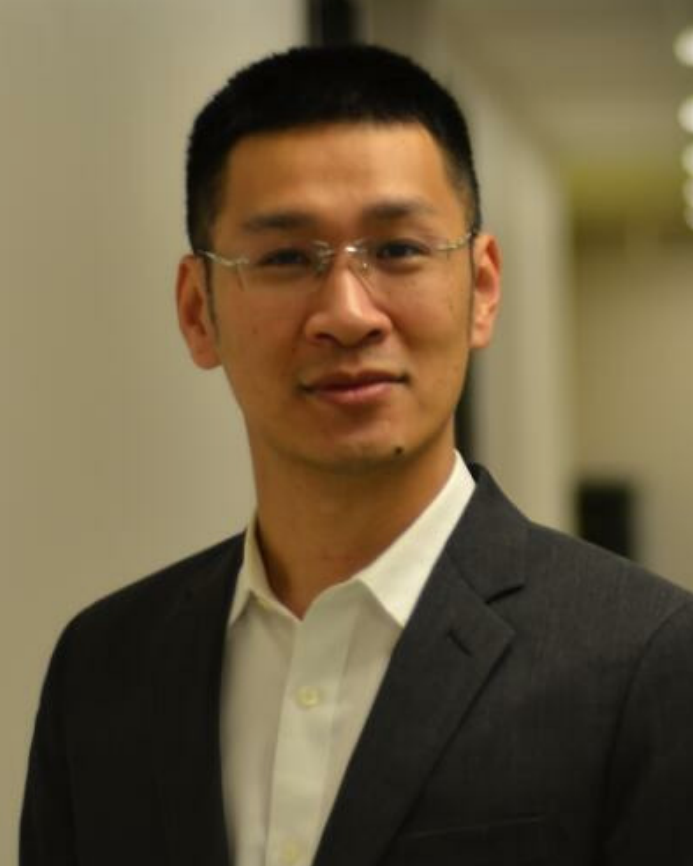}}]{Shangqing Zhao} (Member, IEEE)  received the Ph.D. degree in computer science from the University of South Florida in 2021. He is currently an Assistant Professor with the School of Computer Science, University of Oklahoma. His research interests include network and mobile system design and security.
	\end{IEEEbiography}
	\begin{IEEEbiography}[{\includegraphics[width=1in,height=1.8in,clip,keepaspectratio]{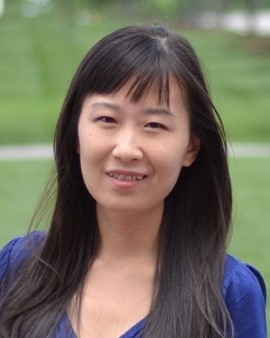}}]{Yao Liu} (Senior Member, IEEE) received the Ph.D. degree in computer science from North Carolina State University, in 2012. She is an professor in the Bellini College of Artificial Intelligence, Cybersecurity and Computing, University of South Florida. Her research interests include in the security applications for cyberphysical systems, Internet of Things, and machine learning. She was an NSF CAREER Award recipient in 2016. She also received the ACM CCS Test-of-Time Award by ACM SIGSAC in 2019.
	\end{IEEEbiography}
	\begin{IEEEbiography}[{\includegraphics[width=1in,height=1.8in,clip,keepaspectratio]{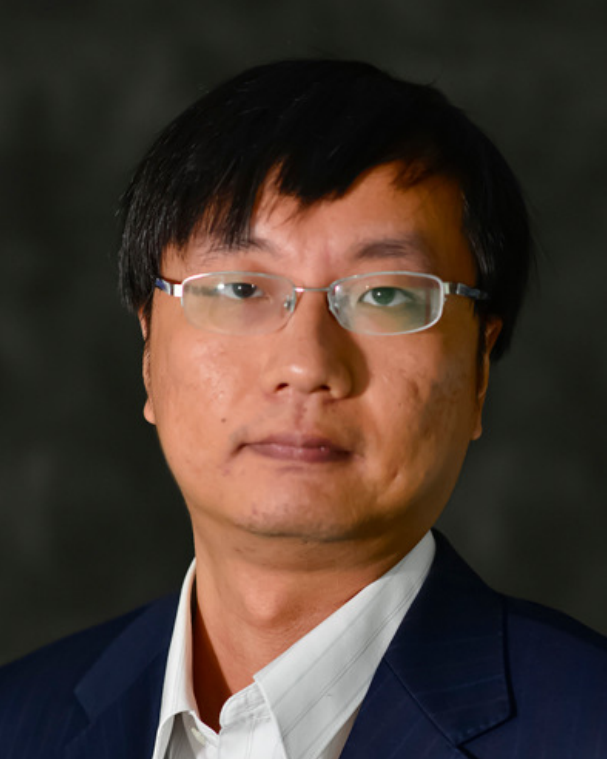}}]{Zhuo Lu} (Senior Member, IEEE) received the Ph.D. degree from North Carolina State University, in 2013. He is an associate professor with the Department of Electrical Engineering, University of South Florida. His research has been mainly focused on both theoretical and system perspectives on communication, network, and security. He received the NSF CISE CRII award in 2016, the Best Paper Award from IEEE GlobalSIP in 2019, and the NSF CAREER award in 2021.
	\end{IEEEbiography}
	
\end{document}